\tikzset{double line with arrow/.style args={#1,#2}{decorate,decoration={markings,%
mark=at position 0 with {\coordinate (ta-base-1) at (0,1pt);
\coordinate (ta-base-2) at (0,-1pt);},
mark=at position 1 with {\draw[#1] (ta-base-1) -- (0,1pt);
\draw[#2] (ta-base-2) -- (0,-1pt);
}}}}
\theoremstyle{plain}
\newtheorem{theorem}{Theorem}
\newtheorem{prop}{Proposition}
\newtheorem{lemma}{Lemma}
\newtheorem{cor}{Corollary}
\theoremstyle{definition}
\newcommand{\beq}{\begin{equation}}
\newcommand{\eeq}{\end{equation}}
\newcommand{\nn}{\nonumber}
\newcommand{\F}{\mathcal{F}}
\newcommand{\PP}{{\mathbb P}}
\newcommand{\RR}{{\mathbb R}}
\newcommand{\CC}{{\mathbb C}} 
\newcommand{\QQ}{{\mathbb Q}}
\newcommand{\ZZ}{{\mathbb Z}}
\newcommand{\tr}{{\rm tr}}
\newcommand{\e}{\epsilon}
\newcommand{\p}{\partial}
\begin{document}

\title[GUE via Frobenius manifolds]
{GUE via Frobenius Manifolds. I. From Matrix Gravity 
to Topological Gravity and Back}
\author{Di Yang}
\address{School of Mathematical Sciences, University of Science and Technology of China, Hefei 230026, P.R.~China}
\email{diyang@ustc.edu.cn}
\date{}
\begin{abstract}
Dubrovin establishes the relationship between 
the GUE partition function and the partition function 
of Gromov--Witten invariants of the complex projective line. 
In this paper, we give a direct proof of Dubrovin's result. 
We also present in a diagram the recent progress on topological gravity and matrix gravity.
\end{abstract}

\keywords{Frobenius manifold, 
Dubrovin--Zhang hierarchy, 
GUE, Toda lattice hierarchy, 
jet space, topological gravity, matrix gravity}
\maketitle

\tableofcontents

\section{Introduction}\label{section1}
For $n\geq1$ being an integer, denote by ${\mathcal H}(n)$ the space of $n \times n$ hermitian matrices. 
The {\it normalized Gaussian Unitary Ensemble (GUE) partition function of size~$n$} is defined by 
\beq\label{partnormaldef20220201}
Z_n^{\rm GUE1}({\bf s};\e) = 2^{-n/2} (\pi\epsilon)^{-n^2/2} \int_{{\mathcal H}(n)} e^{-{\frac1\e}  \tr \, Q(M; {\bf s})} dM,
\eeq
where ${\bf s}:=(s_1,s_2,\cdots)$, $Q(y; {\bf s})$ is a power series in~$y$ of the form
\beq\label{pot}
Q(y; {\bf s}) = \frac12 y^2 -\sum_{j\geq 1} s_{j} y^{j},
\eeq
and $dM = \prod_{1\leq i\leq n} d M_{ii} \prod_{1\leq i<j\leq n} d{\rm Re} M_{ij} d{\rm Im}M_{ij}$.
For the interest of the present paper, we understand this integral in the way of first expanding
the integrand as a power series of~${\bf s}$ then 
integrating the coefficient of each monomial of~${\bf s}$ with respect to the measure~$dM$, 
and we note that the factor $2^{-n/2} (\pi\epsilon)^{-n^2/2}$ 
 in front of this integral is a normalization factor so that $Z_n^{\rm GUE1}({\bf 0};\e)\equiv1$. 

The integral in~\eqref{partnormaldef20220201} is closely related
to the enumeration of ribbon graphs (cf.~\cite{BD, BIZ, BIPZ, DFGZ, EMP, HZ, thooft1, thooft2, IZ, KKN, mehta, Witten, Zhou2}). 
Denote by 
$\mathcal{R}_{f; j_1,\dots, j_k}$ the set of oriented not-necessarily connected  
ribbon graphs having $f$~faces and $k$~vertices with valencies $j_1$, \dots, $j_k$, 
and by $\mathcal{R}^{\rm conn}_{f;j_1,\dots,j_k} \subset \mathcal{R}_{f;j_1,\dots,j_k}$ the subset consisting of the {\it connected} ones.
Then the partition function 
$Z_n^{\rm GUE1}({\bf s};\e)$ has the expression:
\begin{align}\label{ZnGUE1expan}
Z_n^{\rm GUE1}({\bf s};\e) =  1+
 \sum_{f, k\geq 1} \sum_{j_1, \dots, j_k\geq 1}  b(f;{\bf j})  s_{j_1} \cdots s_{j_k} n^f
 \e^{\frac{|{\bf j}|}2-k}, 
\end{align}
where ${\bf j}=(j_1,\dots,j_k)$, $|{\bf j}|=j_1+\cdots+j_k$, and
\beq
b(f;{\bf j}) = \sum_{G\in \mathcal{R}_{f;{\bf j}}} \frac{j_1 \cdots j_k}{|{\rm Aut} (G)|}.
\eeq
Applying Euler's formula to ribbon graphs, we see that 
$Z_n^{\rm GUE1}({\bf s};\e) \in \QQ\bigl[n,\e,\e^{-1}\bigr][[{\bf s}]]$.  
By further taking the logarithms on both sides of~\eqref{ZnGUE1expan} we obtain 
\beq
\log Z_n^{\rm GUE1}({\bf s};\e) =  
 \sum_{f,k\geq 1} \sum_{j_1, \dots, j_k\geq 1} a(f;{\bf j})  s_{j_1} \cdots s_{j_k} n^{f}
 \e^{\frac{|{\bf j}|}2-k} \in \QQ\bigl[n,\e,\e^{-1}\bigr][[{\bf s}]],
\eeq
where 
\beq
a(f;{\bf j}) = \sum_{G\in \mathcal{R}^{\rm conn}_{f;{\bf j}}} \frac{j_1 \cdots j_k}{|{\rm Aut} (G)|}.
\eeq

Following t'Hooft~\cite{thooft1, thooft2}, introduce 
\begin{equation}\label{thooft}
x := n \epsilon.
\end{equation}
Define
\beq
Z^{\rm GUE1}(x,{\bf s};\e)=Z_{x/\e}^{\rm GUE1}({\bf s};\e), \quad 
\mathcal{F}^{\rm GUE1}(x,{\bf s};\e)= \log Z_{x/\e}^{\rm GUE1}({\bf s};\e),
\eeq
and we have
\begin{align}
& Z^{\rm GUE1}(x,{\bf s};\e) =  1+
 \sum_{k,f\geq 1} \sum_{j_1, \dots, j_k\geq 1} b(f;{\bf j})  s_{j_1} \cdots s_{j_k} x^f
 \e^{\frac{|{\bf j}|}2-k-f} \in \QQ[x]\bigl(\e^2\bigr)[[{\bf s}]], \label{Zgue1x} \\
& \F^{\rm GUE1}(x,{\bf s};\e) = 
 \sum_{k\geq 1} \sum_{g\geq 0, \, j_1, \dots, j_k\geq 1\atop 2-2g - k + |{\bf j}|/2\ge1} a_g({\bf j}) 
 s_{j_1} \cdots s_{j_k} \e^{2g-2} x^{2-2g - k + \frac{|{\bf j}|}2} \in 
 \e^{-2}\QQ\bigl[x,\e^2\bigr][[{\bf s}]],
 \label{Fgue1x}
\end{align}
where $a_g({\bf j}):=a(2-2g - k+|{\bf j}|/2; {\bf j})$, 
and we used Euler's formula 
\beq
k-\frac{|{\bf j}|}2+f = 2-2g
\eeq
for a connected ribbon graph of genus~$g$. 
We call $Z^{\rm GUE1}(x,{\bf s};\e)$ the {\it normalized GUE partition function}, and 
$\mathcal{F}^{\rm GUE1}(x,{\bf s};\e)$ the {\it normalized GUE free energy}.

As in e.g.~\cite{Du09, DY1}, define the {\it corrected GUE free energy} $\F(x,{\bf s};\e)$ by 
\beq\label{defcorrGUEfree}
\mathcal{F}(x, {\bf s}; \epsilon) 
:= \F^{\rm GUE1}(x,{\bf s};\e) + \frac{x^2}{2\e^2}\biggl(\log x-\frac32\biggr) 
- \frac{\log x}{12} + \zeta'(-1) + \sum_{g\geq2} \frac{\e^{2g-2} B_{2g}}{4g(g-1)x^{2g-2}},
\eeq
where $\zeta(s)$ denotes the Riemann zeta function, and $B_m$ denote the $m$th Bernoulli number.
The {\it corrected GUE partition function} $Z(x,{\bf s};\e)$ is defined as $e^{\F(x,{\bf s};\e)}$. 
Clearly, 
\beq
\e^2 \mathcal{F}(x, {\bf s}; \epsilon) \in \QQ[[\e^2]][[x-1,{\bf s}]], \quad Z(x,{\bf s};\e) \in \QQ((\e^2))[[x-1,{\bf s}]].
\eeq
Equivalently, the corrected GUE partition function $Z(x,{\bf s};\e)$ can be defined as
\beq\label{globaldefZ}
\frac{(2\pi)^{-n}\e^{-\frac1{12}}}{{\rm Vol}(n)} \int_{{\mathcal H}(n)} e^{-{\frac1\e}  \tr \, Q(M; {\bf s})} dM, \quad x=n\e,
\eeq
where 
\beq\label{voldef41}
{\rm Vol}(n) := {\rm Vol}\left(U(n)/U(1)^n\right) = \frac{\pi^{\frac{n(n-1)}2}}{G(n+1)}, \quad G(n+1)=\prod_{j=1}^{n-1} j! .
\eeq
To see this equivalence, we view $G(n+1)$ as an analytic function ($G$ denotes Barnes' $G$-function), then  
together with~\eqref{Zgue1x} we see that the coefficient of each monomial of ${\bf s}$ in $Z(x, {\bf s};\e)$ defined from~\eqref{globaldefZ} 
is an analytic function of $x,\e$, and by taking the $n=x/\e \to \infty$ asymptotics in these coefficients 
we obtain the equivalence. Here one needs to use the fact that  
Barnes' $G$-function (cf.~\cite{Barnes,FL,WW}) has the asymptotic expansion:
\beq
\log G(z+1)\sim \frac{z^2}2 \biggl(\log z -\frac32\biggr) + \frac{z}2\log{2\pi} - \frac{\log z}{12} + \zeta'(-1) + \sum_{\ell\geq 1} \frac{B_{2\ell+2}}{4 \ell (\ell+1)z^{2\ell}}.
\eeq
For simplicity of terminology, we refer to the corrected GUE partition function (resp. corrected GUE free energy) as 
the {\it GUE partition function} (resp. {\it GUE free energy}), 
as we do in e.g.~\cite{DLYZ2, DY1, DY2}.  
Let $\mathcal{F}_g(x,{\bf s}):={\rm Coef}(\e^{2g-2},\mathcal{F}(x, {\bf s}; \epsilon))$, $g\geq0$. We call
$\mathcal{F}_g(x,{\bf s})$ the {\it genus~$g$ part of the GUE free energy} (for short the {\it genus~$g$ GUE free energy}).
It is also helpful to recall that the GUE partition function $Z(x,{\bf s};\e)$ 
satisfies the following dilaton and string 
equations, respectively:
\begin{align}
& \sum_{j\geq1} \Bigl(s_j-\frac12\delta_{j,2}\Bigr) \frac{\p Z(x,{\bf s};\e)}{\p s_{j}} + x \frac{\p Z(x,{\bf s};\e)}{\p x} +\e \frac{\p Z(x,{\bf s};\e)}{\p \e} + \frac{Z(x,{\bf s};\e)}{12}  = 0, \label{dilaton} \\
& \sum_{j\geq1} j \Bigl(s_j-\frac12\delta_{j,2}\Bigr) \frac{\p Z(x,{\bf s};\e)}{\p s_{j-1}} + \frac{xs_1}{\e^2} Z(x,{\bf s};\e) = 0. \label{string} 
\end{align}

The geometric way in understanding the GUE partition function 
is through the theory of integrable systems (Frobenius manifolds, tau-functions, bihamiltonian structures, etc.). Denote by 
$\Lambda=e^{\e \p_x}$ the shift operator, and let 
\beq\label{defVWintro}
V(x,{\bf s};\e) = \e (\Lambda-1) \frac{\p \log Z(x,{\bf s};\e)}{\p s_1}, \quad W(x,{\bf s};\e)=\e^2\frac{\p^2 \log Z(x,{\bf s};\e)}{\p s_1 \p s_1}.
\eeq
It is known (cf.~e.g.~\cite{AvM}) that the power series 
$(V(x,{\bf s}; \e), W(x,{\bf s};\e))$ is a particular solution to the 
 {\it Toda lattice hierarchy}~\cite{F, mana}, i.e., the difference 
 operator~$L$ defined by 
\beq\label{laxgue}
L = \Lambda + V(x, {\bf s};\e) + W(x,{\bf s};\e) \Lambda^{-1} 
\eeq
satisfies the following Lax-type equations:
\beq\label{todahier}
\e\frac{\p L}{\p s_j} = \left[(L^j)_+,L\right], \quad j\geq1.
\eeq
Moreover, $Z(x,{\bf s};\e)$ is a tau-function of the solution $(V(x,{\bf s}; \e), W(x,{\bf s};\e))$
to the Toda lattice hierarchy. Here and below, 
for a difference operator~$P$ in its normal form $P=\sum_{m\in \ZZ} P_m \Lambda^m$, 
$P_+:=\sum_{m\geq0} P_m \Lambda^m$, $P_-:=\sum_{m<0} P_m \Lambda^m$, 
and ${\rm res} \, P:=P_0$. We note that the functions 
$V(x,{\bf s}; \e), W(x,{\bf s};\e)$ can be uniquely determined by 
the Toda lattice hierarchy along with the initial data
\beq\label{inigue}
V(x, {\bf 0}; \e) = 0, \quad W(x,{\bf 0};\e)=x.
\eeq
The definition of a tau-function for the Toda lattice hierarchy as well as the proof of these statements 
will be reviewed in Section~\ref{section2}. 
Equations~\eqref{dilaton}--\eqref{string}, the condition $\F(x,{\bf s};\e)\in \e^{-2} \QQ[[\e^2]][[x-1,{\bf s}]]$ 
and the tau-function statement all together uniquely determine the partition function $Z(x,{\bf s};\e)=e^{\F(x,{\bf s};\e)}$ up to a pure constant factor. 

The Frobenius manifold~\cite{Du93,Du96,DZ-norm,DZ1} 
that corresponds to the Toda lattice hierarchy has the potential~\cite{Du96}:
\beq\label{FP1}
F = \frac12 v^2 u + e^u,
\eeq
where $v,u$ are the flat coordinates with $\p/\p v$ being the unit vector field. More precisely, the differential of the generating function 
for the hamiltonian densities of the dispersionless Toda lattice hierarchy is a flat section of the Dubrovin 
connection~\cite{Du96} of the Frobenius manifold~\eqref{FP1}. It is helpful to note that this Frobenius manifold 
can also be obtained from the Gromov--Witten (GW) invariants of~$\mathbb{P}^1$~\cite{Du96, KM}. Indeed,
the potential~$F$ equals, up to a quadratic function, the so-called genus zero primary free energy of 
these GW invariants. 
We often call the Frobenius manifold with potential~\eqref{FP1} the $\mathbb{P}^1$-Frobenius manifold.

For a Frobenius manifold, Dubrovin~\cite{Du96} constructs an integrable 
hierarchy of tau-symmetric hamiltonian PDEs of hydrodynamic type, called the {\it principal hierarchy}. 
This integrable hierarchy has a particular solution called the {\it topological solution}. In~\cite{DZ0} Dubrovin and Zhang prove that the 
 tau-function of the topological solution to the principal hierarchy
 (exponential of the genus zero free energy for the topological solution)
  satisfies the genus zero Virasoro constraints (see also~Liu and Tian~\cite{LT}).

For a {\it semisimple} Frobenius manifold, by solving Virasoro constraints in the form of the so-called 
{\it loop equation}, Dubrovin and Zhang~\cite{DZ-norm} (cf.~\cite{Du14})  
construct the partition function of the Frobenius manifold, and use it to define the topological deformation of the principal hierarchy, 
now called the {\it Dubrovin--Zhang (DZ) integrable hierarchy} 
for the Frobenius manifold (cf.~\cite{BPS}). By their construction, the partition function of the Frobenius manifold 
is a particular tau-function called the {\it topological tau-function} for the DZ hierarchy, that 
is the tau-function of a particular solution, called the {\it topological solution}, to the DZ hierarchy. In particular, if  
the semisimple Frobenius manifold comes from the quantum cohomology of a smooth projective variety~$X$, the partition function 
of the Frobenius manifold equals the partition function of the GW 
invariants of~$X$ \cite{Du14, DZ-norm, Givental1, Givental2, Teleman}.
A key notion in the construction of Dubrovin and Zhang is the {\it jet space}~\cite{DW, DZ-norm, Getzler2, Witten},
not only because the solution (the free energy in higher genera) to the 
 DZ loop equation is represented in terms of jet variables (jets for short) leading to uniqueness, but also 
 due to the validity at the level of integrable hierarchy (free property of jets). More precisely, firstly,
the free energy in higher genera gives rise to the topological tau-function when the 
 jet variables are subjected to the topological solution of the principal hierarchy. Secondly, by construction
 the DZ hierarchy is quasi-trivial, namely, it is obtained from the principal hierarchy under a quasi-Miura transformation, which is given by the higher genera free energy in terms of jets; and 
 this is interesting, because the quasi-Miura transformation
 could be substituted by any monotone solution to the principal hierarchy and makes it become a 
 solution to the DZ hierarchy \cite{Du09, Du14, DY3, DZ-norm, DZ1, YZ}. In~\cite{YZ}, it is suggested to interpret this as 
 a universality class of Dubrovin \cite{Du06, Du08, Du10, DGKM}. 
 A particular dense subset of monotone solutions to the principal hierarchy
 can be obtained by performing time shifts in the topological solution, but there are also other interesting 
 monotone solutions. They lead to solutions to the DZ hierarchy: all these solutions are beautifully connected to the 
 topological solution (GW invariants in the case of quantum cohomology).
  
The quantum cohomology of~$\mathbb{P}^1$ gives a semisimple Frobenius manifold. As we have mentioned above, the potential
of this Frobenius manifold is given by~\eqref{FP1}, and 
the dispersionless limit of the Toda lattice hierarchy~\eqref{todahier} form a part of the principal hierarchy (often called the stationary flows) 
of this Frobenius manifold. According to Dubrovin and Zhang~\cite{DZ1} 
the corresponding DZ hierarchy is {\it normal Miura equivalent} 
to the extended Toda lattice hierarchy~\cite{CDZ}, with an explicit formula of the normal Miura transformation. 
The DZ hierarchy is quasi-trivial, so is the extended Toda lattice hierarchy. According to the above 
discussion, their quasi-trivial transformations can be obtained from 
the free energy of GW invariants of~$\mathbb{P}^1$. For example, in~\cite{YZ}, the {\it dessins/LUE solution} to the 
 Toda lattice hierarchy is considered and it is shown that this solution can be obtained by the application of quasi-triviality. 

 
In this paper we study the {\it GUE solution} to the Toda lattice hierarchy (cf. \cite{AvM, Deift, DY1, Y}) by using the DZ approach. 
Denote by 
${\bf v}(x,{\bf s})=(v(x,{\bf s}),u(x,{\bf s}))$ the unique power-series-in-${\bf s}$ solution to the 
principal hierarchy~\eqref{phdefp1} 
 with the initial condition
\beq\label{ini2022}
v(x,{\bf 0})=0, \quad u(x,{\bf 0})=x.
\eeq 
Based on the DZ approach~\cite{DZ-norm, DZ1}, we 
will give a new proof to the following theorem.
\begin{theorem} [Dubrovin~\cite{Du09}]\label{thm1}
The genus zero GUE free energy $\mathcal{F}_0(x,{\bf s})$ has the expression:
\begin{align}
\mathcal{F}_0(x,{\bf s}) = \, &
\frac12\sum_{p,q\geq0} (p+1)! (q+1)! \Bigl(s_{p+1}-\frac12\delta_{p,1}\Bigr) \Bigl(s_{q+1}-\frac12\delta_{q,1}\Bigr) 
\Omega_{2,p;2,q}({\bf v}(x,{\bf s}))  \label{F0dub} \\
& + x\sum_{p\geq0} (p+1)! \Bigl(s_{p+1}-\frac12\delta_{p,1}\Bigr) \theta_{2,p}({\bf v}(x,{\bf s}))  + \frac12 x^2 u(x,{\bf s}).  \nn
\end{align}
For $g\geq1$, the genus~$g$ GUE free energy $\mathcal{F}_g(x,{\bf s})$ satisfies that
\beq\label{fgfmgequal}
\F_g(x,{\bf s}) = F_g^{\mathbb{P}^1}\biggl({\bf v}(x,{\bf s}), \frac{\p {\bf v}(x,{\bf s})}{\p x},\dots,\frac{\p^{3g-2} {\bf v}(x,{\bf s})}{\p x^{3g-2}}\biggr) 
+ \Bigl(\zeta'(-1)-\frac1{24} \log(-1)\Bigr)\delta_{g,1}.
\eeq
Here, $F_g^{\mathbb{P}^1}(z_1,\dots,z_{3g-2})$ $(g\geq1)$ denotes the genus~$g$ free energy in jets of the 
$\mathbb{P}^1$-Frobenius manifold
(see~\eqref{fgfmgequalp1} of Section~\ref{section3}).
\end{theorem}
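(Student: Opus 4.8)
The plan is to derive both parts from the Dubrovin--Zhang construction for the $\mathbb{P}^1$-Frobenius manifold, exploiting the uniqueness recorded at the end of the introduction: the dilaton equation~\eqref{dilaton}, the string equation~\eqref{string}, the analyticity $\e^2\F\in\QQ[[\e^2]][[x-1,{\bf s}]]$, and the Toda tau-function property determine $Z(x,{\bf s};\e)$ up to a pure constant factor. I would begin by setting up the dictionary between the GUE times and the principal hierarchy: the shift $s_j\mapsto s_j-\tfrac12\delta_{j,2}$ appearing in~\eqref{F0dub} and~\eqref{fgfmgequal} reflects the Gaussian term $\tfrac12y^2$ in the potential $Q(y;{\bf s})$, the factorials $(p+1)!$ convert the Toda flows $\p/\p s_j$ into the normalized principal flows, and the GUE initial data~\eqref{inigue} is matched with the principal-hierarchy initial data~\eqref{ini2022}. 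This identifies ${\bf v}(x,{\bf s})$ with the image of the GUE solution to the Toda lattice hierarchy under the normal Miura equivalence to the extended Toda lattice hierarchy cited after~\eqref{FP1}.

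For the genus zero statement~\eqref{F0dub} I would avoid a direct moment computation and instead recognize the right-hand side as the universal genus zero free energy of the topological solution of the principal hierarchy, specialized so that only the second family of flows is excited, which is why only $\Omega_{2,p;2,q}$ and $\theta_{2,p}$ occur. To prove equality, I would check that this right-hand side and $\F_0(x,{\bf s})$ satisfy the same characterizing data: the genus zero (dispersionless) reductions of~\eqref{dilaton}--\eqref{string}, the dispersionless Toda equations governing $v=\p_x\p_{s_1}\F_0$ and $w=\p_{s_1}^2\F_0$, and the initial value forced by~\eqref{ini2022}. Uniqueness of the power-series solution then yields~\eqref{F0dub}.

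For $g\geq1$ the mechanism is the quasi-triviality of the Dubrovin--Zhang hierarchy emphasized in the introduction. The jet function $F_g^{\mathbb{P}^1}(z_1,\dots,z_{3g-2})$ solves the DZ loop equation, and by construction the generating series $\sum_{g\geq1}\e^{2g-2}F_g^{\mathbb{P}^1}$, evaluated on the $x$-jets of \emph{any} monotone solution of the principal hierarchy, is the logarithm of a tau-function of the DZ, hence extended Toda, hence Toda hierarchy. I would verify that the GUE solution ${\bf v}(x,{\bf s})$ is such a monotone solution, so that
\[
\widetilde Z:=\exp\Bigl(\sum_{g\geq0}\e^{2g-2}F_g^{\mathbb{P}^1}\bigl({\bf v},\p_x{\bf v},\dots,\p_x^{3g-2}{\bf v}\bigr)\Bigr)
\]
is the tau-function of the GUE solution. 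Two tau-functions of the same solution agree up to a factor independent of~${\bf s}$, so $\widetilde Z$ and $Z(x,{\bf s};\e)$ share all dependence on~${\bf s}$; checking that $\widetilde Z$ also obeys~\eqref{dilaton}--\eqref{string} and the required analyticity reduces the whole theorem, via the quoted uniqueness, to pinning down a single pure constant, which is exactly $(\zeta'(-1)-\tfrac1{24}\log(-1))\delta_{g,1}$ in~\eqref{fgfmgequal}.

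The main obstacle I anticipate is the determination of this genus one constant. It demands a careful comparison along ${\bf s}={\bf 0}$ of the two normalizations. On the GUE side the constant is governed by the Barnes $G$-function asymptotics built into the corrected free energy~\eqref{defcorrGUEfree}, which supplies the $\zeta'(-1)$. On the Frobenius-manifold side, $F_1^{\mathbb{P}^1}$ contains the standard genus one term $\tfrac1{24}\log$ of a Jacobian built from the first jets; evaluating it against the Toda identification of $W$ with the exponential of a flat coordinate on the initial data of~\eqref{ini2022} forces a branch choice for the logarithm that produces the $\tfrac1{24}\log(-1)$. Reconciling this branch, and confirming that all contributions at $g\geq2$ cancel so that no further constants survive, is where the delicate bookkeeping concentrates.
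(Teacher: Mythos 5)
Your proposal follows essentially the same route as the paper's own proof: in genus zero you characterize the right-hand side of \eqref{F0dub} through the tau-structure of the principal hierarchy together with the dispersionless string/dilaton equations and the initial data \eqref{ini2022}, and in higher genus you apply Dubrovin--Zhang quasi-triviality to the monotone GUE solution and then invoke the uniqueness statement (dilaton, string, analyticity, tau-function property) to match the result with $Z(x,{\bf s};\e)$ up to constants. Your account of the genus-one constant --- the Barnes $G$-function asymptotics supplying $\zeta'(-1)$ on the GUE side, and the branch of the $\tfrac1{24}\log$ term in $F_1^{\mathbb{P}^1}$ evaluated on the jets of the initial data supplying $\tfrac1{24}\log(-1)$ --- is exactly how the constant in \eqref{fgfmgequal} arises when the two sides are compared at ${\bf s}={\bf 0}$.
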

Originally, a proof of this theorem was outlined in~\cite{Du09}, 
where the terminology of {\it vacuum tau-function}~\cite{DZ-norm} is used. 
Dubrovin also found~\cite{Du09}  
that the GUE partition function can be identified with part of the partition function of the Frobenius manifold with potential 
\beq\label{FNLS}
F = \frac12 (u^1)^2 u^2 + \frac12 (u^1)^2 \log u^1 -\frac34 (u^1)^2,
\eeq
leading to a second proof of the theorem (cf.~\cite{CDZ, FY}) via space/time 
duality (in genus zero: the Legendre-type transformation~\cite{Du96} of Dubrovin). 
This Frobenius manifold is often called an {\it NLS Frobenius manifold} \cite{CDZ, CvdLPS, Du96}, and will be 
discussed in details in the next of the article-series. 
Our proof given in Section~\ref{section3} will be a relatively more direct one, 
which is similar to the one given recently in~\cite{YZ} for a result for 
Grothendieck's dessins d'enfant/Laguerre Unitary Ensemble (LUE).

According to Witten~\cite{Witten}, the GUE partition function being restricted to the even couplings 
corresponds to the matrix gravity. We find that performing a further restriction given by a certain explicit and rigorous limit in the 
jet space for the higher genera parts for the even GUE partition function yields 
those for Witten's topological gravity (the celebrated Witten--Kontsevich partition function); see Corollary~\ref{cor1}.  
That means that, at least for these higher genera parts, 
the matrix gravity contains all the information of the topological gravity. 
Usually, to come back to the matrix gravity, one needs a deformation theory~\cite{DLYZ1, DLYZ2, FP, Givental2}. 
But, the recent studies~\cite{DLYZ2, DY2, DY4, YZQS} all together show that one can start with Witten's topological gravity 
and come back to the matrix gravity {\it without} a deformation theory, again at least in the higher genera.  
More precisely, we first go to the special cubic Hodge partition function by a {\it space/time duality}~\cite{Du96, YZa, YZQS} (see also~\cite{A, A2}) 
(in~\cite{YZQS} this is revealed by the {\it Hodge-BGW correspondence}), 
and then go to the so-called modified even GUE partition function by the Hodge-GUE correspondence~\cite{DLYZ2, DY2}, 
and finally back to the even GUE partition function via a product formula~\cite{DY4}, again at least to the higher genera in jets.   
(We note that the genus zero parts for the above-mentioned models are relatively easy, so for us the non-trivial things 
are in higher genera.) As a summary, we draw the following diagram:
\begin{center}
\begin{tikzcd}
 \quad F^{\rm WK}_g \arrow[leftrightarrow, "via\,s./t.\,duality"]{d} \quad  
 & \quad F^{\rm even}_g \arrow[leftrightarrow,"via\,a\,pro.\,formula"]{d}  \arrow[l] \quad 
 & \quad F_g^{\mathbb{P}^1} \arrow[l] \\
 \quad H_g \arrow[bend left]{u} \arrow[-,double line with arrow={-,-}, "{\rm Hodge-GUE}"]{r} \quad
 & \quad \widetilde F_g \quad 
 &  
\end{tikzcd}
\end{center}
Here, $g\geq1$, each of these functions lives in a certain jet space, and $F_g^{\mathbb{P}^1}$, 
$F^{\rm WK}_g$, $F^{\rm even}_g$, $\widetilde F_g$, $H_g$ stand for the genus~$g$ free energies in jets 
for GW invariants of~$\PP^1$, the Witten--Kontsevich correlators, the even GUE correlators, 
 the modified even GUE correlators, and certain special cubic Hodge integrals, respectively. 
Each one-direction arrow means taking a certain 
restriction or say limit (see \eqref{fevengdef}, \eqref{146428}--\eqref{147428}, \eqref{lowestdegreewk}), 
the long ``$=$" simply means equal (see~\eqref{jetFHequal410}), the double-direction arrow between $F^{\rm WK}_g$ and $H_g$ 
means the two are related by an invertible change of their independent jet-variables (see~\eqref{hgfwkg425} or~\eqref{123428}) up to a scalar $4^{g-1}$, 
and the double-direction arrow between $F^{\rm even}_g$'s  and $\widetilde F_g$'s 
means they are related by an invertible operation (see~\eqref{137428} or~\eqref{138428}; note that 
in this case there is a shuffling in genus), which comes from an invertible 
{\it product formula}~\cite{DY4}.

\medskip

\noindent {\bf Organization of the paper}
In Section~\ref{section2}, we review Toda lattice hierarchy 
and GUE. In Section~\ref{section3}, we prove Theorem~\ref{thm1}. 
In Section~\ref{section4} we present a discussion on topological gravity and matrix gravity.

\smallskip

\noindent {\bf Acknowledgements} 
The author is grateful to Boris Dubrovin, Don Zagier, Youjin Zhang and Jian Zhou for their advice. The work is  
partially supported by NSFC No.~12061131014.

\section{Frobenius manifold, Toda lattice hierarchy and GUE}\label{section2}

This section contains materials of several known results about $\mathbb{P}^1$-Frobenius manifold, 
Toda lattice hierarchy and GUE. We refer~\cite{Du96, DY1, Y} to the reader for further interest.

\subsection{Principal hierarchy and genus zero free energy}
Consider the $\mathbb{P}^1$-Frobenius manifold, denoted by~$M$, which has the potential~\eqref{FP1}. 
Denote by $\eta$ the invariant flat metric, and denote $v^1=v$, $v^2=u$, ${\bf v}=(v,u)$.
Following~\cite{DZ-norm, DZ1} (see also~\cite{DLYZ1}), we fix the calibration $\theta_{\alpha,p}({\bf v})$ ($\alpha=1,2, p\geq0$) for this Frobenius manifold  
via the generating series
\begin{align}
&\theta_1({\bf v};z):=\sum_{p\geq0} \theta_{1,p}({\bf v}) z^p 
= -2e^{zv}\sum_{m\geq0}\Bigl(\gamma-\frac12 u+\psi(m+1)\Bigr)e^{m u} \frac{z^{2m}}{m!^2}, \label{theta1z}\\
&\theta_2({\bf v};z):=\sum_{p\geq0} \theta_{2,p}({\bf v}) z^p = z^{-1} \biggl(\sum_{m\geq0} e^{m u+z v} \frac{z^{2m}}{m!^2}-1\biggr), \label{theta2z}
\end{align}
where $\gamma$ is the Euler constant and $\psi$ denotes the digamma function. (We recall that the calibration is a 
choice of a family of tau-symmetric hamiltonian densities for the principal hierarchy.)
The associated {\it principal hierarchy}~\cite{Du96} reads 
\begin{align}\label{phdefp1}
\frac{\p v^\alpha}{\p T^{\beta,q}} = 
\sum_{\gamma=1}^2 \eta^{\alpha\gamma}\p_x \biggl(\frac{\p \theta_{\beta,q+1}({\bf v})}{\p v^\gamma}\biggr), \quad q\geq0, \, \alpha,\beta=1,2,
\end{align}
where $\eta^{\alpha\gamma}=\delta_{\alpha+\gamma,3}$, $\alpha,\gamma=1,2$. 
As in~\cite{Du96}, define a family of holomorphic functions $\Omega_{\alpha,p;\beta,q}^{[0]}({\bf v})$ on~$M$, called the  
 {\it genus zero two-point correlation functions}, via
\beq\label{two-pt}
\sum_{p,q\geq 0} \Omega_{\alpha,p;\beta,q}^{[0]}({\bf v}) z^p y^q= \frac{1}{z+y} 
\Biggl(\sum_{\rho,\sigma=1}^2\frac{\p \theta_\alpha({\bf v}; z)}{\p v^\rho} \eta^{\rho\sigma} \frac{\p \theta_\beta({\bf v};y)}{\p v^\sigma} 
- \eta_{\alpha\beta} \Biggr),\quad \alpha,\beta=1,2.
\eeq
For an arbitrary solution ${\bf v}({\bf T})$ to the principal hierarchy~\eqref{phdefp1}, there exists a function $\F_0^M({\bf T})$ such that
\beq\label{deftaustructureg0}
\frac{\p^2 \F_0^M({\bf T})}{\p T^{\alpha,p}\p T^{\beta,q}} = \Omega_{\alpha,p;\beta,q}^{[0]}({\bf v}({\bf T})), \quad \alpha=1,2, \, p,q\geq0.
\eeq 
We call $\F_0^M({\bf T})$ the {\it genus zero free energy} of the solution ${\bf v}({\bf T})$ to the principal hierarchy~\eqref{phdefp1}, 
and call the exponential $\exp(\F_0^M({\bf T}))$
 the {\it tau-function} of the solution ${\bf v}({\bf T})$ to the principal hierarchy~\eqref{phdefp1}. 
 
As we have briefly mentioned in the Introduction, 
the $T^{2,q}$-flows of the principal hierarchy~\eqref{phdefp1} coincide with the 
dispersionless limit of the Toda lattice hierarchy~\eqref{todahier} under
\beq\label{normalflow}
\frac{\p}{\p T^{2,p}}=\frac1{(p+1)!}\frac{\p}{\p s_{p+1}}, \quad p\geq0.
\eeq 
We also mentioned in the Introduction that the potential $F$ of the $\mathbb{P}^1$-Frobenius manifold 
equals, up to a quadratic function, the genus zero primary free energy of 
the GW invariants of~$\mathbb{P}^1$. More details about the GW invariants will be given in Section~\ref{section3}.

\subsection{Review on tau-functions for the Toda lattice hierarchy}
Let 
\beq
\mathcal{A} := \mathbb Z[V(x), W(x), V(x\pm \e), W(x\pm\e), \dots]
\eeq
be the ring of polynomials with integer coefficients. The second-order difference operator 
$L= \Lambda + V(x) + W(x) \Lambda^{-1} $ (cf.~\eqref{laxgue})
can be written in the matrix form
\beq
\mathcal{L} =  \Lambda + U(\lambda) , \quad U(\lambda)= 
\begin{pmatrix} V(x)-\lambda & W(x) \\ -1 & 0 \end{pmatrix},
\eeq
where we recall that $\Lambda$ is the shift operator: $\Lambda=e^{\e \p_x}$.
\begin{lemma}[\cite{DY1}] \label{lemmaone} There exists a unique $2\times 2$ matrix series 
\beq
R(\lambda)=
\begin{pmatrix} 1 & 0\\0 & 0\end{pmatrix}
+{\rm O}\bigl(\lambda^{-1}\bigr) \in {\rm Mat} \bigl(2, \mathcal{A}[[\lambda^{-1}]]\bigr)
\eeq
satisfying equation
\beq\label{eqres}
\Lambda(R(\lambda)) U(\lambda)-U(\lambda) R(\lambda)=0
\eeq
along with the normalization conditions
\beq\label{normres}
\tr \,R(\lambda)=1, \quad \det R(\lambda)=0.
\eeq
\end{lemma}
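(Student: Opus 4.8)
The plan is to construct $R(\lambda)$ recursively in powers of $\lambda^{-1}$, reading off both existence and uniqueness from the recursion. Write $R(\lambda)=\bigl(\begin{smallmatrix}1&0\\0&0\end{smallmatrix}\bigr)+\sum_{k\geq1}R_k\lambda^{-k}$, set $E=\bigl(\begin{smallmatrix}1&0\\0&0\end{smallmatrix}\bigr)$ and $U_0=\bigl(\begin{smallmatrix}V&W\\-1&0\end{smallmatrix}\bigr)$ so that $U(\lambda)=U_0-\lambda E$, and denote the entries of $R$ by $A,B,C,D$, with $f^{\pm}:=\Lambda^{\pm1}(f)$. Splitting \eqref{eqres} into its four matrix entries, the $(2,2)$-entry reads $B=-W\,C^{+}$, the $(2,1)$-entry together with the normalization $\tr R=1$ (so $D=1-A$) gives $A+A^{+}=1+(\lambda-V)C^{+}$, and the normalization $\det R=0$ (again using $\tr R=1$) gives $A-A^{2}=-W\,C^{+}C$. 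I would use these three relations as the recursion: expanding $A-A^{2}=-W\,C^{+}C$ in degree $\lambda^{-k}$ determines $\mathrm{Coef}(\lambda^{-k},A)$ from lower coefficients of $A$ and $C$, while expanding $A+A^{+}=1+(\lambda-V)C^{+}$ in degree $\lambda^{-k}$ determines $\mathrm{Coef}(\lambda^{-k-1},C)$ by a single application of $\Lambda^{-1}$, the lowest order fixing $\mathrm{Coef}(\lambda^{-1},C)=1$. Since only $\Lambda^{\pm1}$ and multiplication (never division by $W$) occur, all coefficients lie in $\mathcal{A}$, and $B,D$ are recovered in $\mathcal{A}$ from $B=-W\,C^{+}$ and $D=1-A$. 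This produces a unique candidate $R\in\mathrm{Mat}(2,\mathcal{A}[[\lambda^{-1}]])$ obeying $\tr R=1$, $\det R=0$, and the $(2,1)$- and $(2,2)$-entries of \eqref{eqres}.

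It then remains to check that the constructed $R$ also satisfies the other two entries of \eqref{eqres}. The $(1,2)$-entry is automatic: substituting $B=-W\,C^{+}$ and $D=1-A$ and using $A+A^{+}-1=-(V-\lambda)C^{+}$, it becomes $W$ times an expression that vanishes identically. The $(1,1)$-entry is the crux, and I would dispose of it by a determinant argument rather than order by order. By construction $\Lambda(R)\,U-U\,R=\Delta_{11}E$ for a single scalar series $\Delta_{11}$; taking determinants gives $\Lambda(\det R)\det U=\det\!\bigl(U R+\Delta_{11}E\bigr)$. The left-hand side is $0$ because $\det R=0$, while Jacobi's formula for $2\times2$ matrices expands the right-hand side as $\det(UR)+\Delta_{11}\,\mathrm{tr}\!\bigl(\mathrm{adj}(UR)\,E\bigr)+\Delta_{11}^{2}\det E$; here $\det(UR)=\det U\det R=0$ and $\det E=0$, and a short computation using $\mathrm{adj}(UR)=\mathrm{adj}(R)\,\mathrm{adj}(U)$ gives $\mathrm{tr}(\mathrm{adj}(UR)E)=-B$. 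Hence $0=-B\,\Delta_{11}$, and since $B=-W\,C^{+}$ has leading term $-W\lambda^{-1}\neq0$ in the integral domain $\mathcal{A}[[\lambda^{-1}]]$, we conclude $\Delta_{11}=0$. This is the one place where $\det R=0$ is used in an essential, non-recursive way, and I expect it to be the main obstacle.

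Uniqueness is then immediate. Any $R$ solving \eqref{eqres} with the stated normalizations satisfies in particular the $(2,1)$- and $(2,2)$-entries of \eqref{eqres} together with $\tr R=1$ and $\det R=0$; but these are precisely the four relations driving the recursion above, which fix every coefficient of $A,B,C,D$ uniquely. Hence any solution coincides with the one just constructed, which proves both existence and uniqueness.
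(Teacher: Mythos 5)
Your proposal is correct, but note that the paper itself contains no proof of this lemma: it is quoted from \cite{DY1} and used as a black box, so your argument has to stand on its own, and it does. Your three driving relations are exactly what the $(2,1)$- and $(2,2)$-entries of \eqref{eqres} and the two normalizations \eqref{normres} yield, namely $B=-W\,C^{+}$, $A+A^{+}=1+(\lambda-V)C^{+}$, $A-A^{2}=-W\,C^{+}C$, and the induced recursion (lowest order giving ${\rm Coef}(\lambda^{-1},C)=1$, then alternately producing ${\rm Coef}(\lambda^{-k},A)$ and ${\rm Coef}(\lambda^{-k-1},C)$) stays inside $\mathcal{A}$, since $\mathcal{A}$ is closed under $\Lambda^{\pm1}$ and only shifts and multiplications occur. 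The substitution check of the $(1,2)$-entry is right, and the determinant argument for the $(1,1)$-entry is valid: writing $\Lambda(R)U-UR=\Delta_{11}E$, the left-hand determinant vanishes because $\det(\Lambda(R))\det U=\Lambda(\det R)\det U=0$, while the $2\times2$ expansion $\det(X+Y)=\det X+\det Y+\tr({\rm adj}(X)Y)$ together with ${\rm adj}(UR)={\rm adj}(R)\,{\rm adj}(U)$ reduces the right-hand side to $-B\,\Delta_{11}$; since $\mathcal{A}$ is a polynomial ring over $\ZZ$, the ring $\mathcal{A}[[\lambda^{-1}]]$ is an integral domain, and $B=-W\lambda^{-1}+{\rm O}(\lambda^{-2})\neq0$, so $\Delta_{11}=0$. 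Uniqueness then follows as you say, since any solution of \eqref{eqres}--\eqref{normres} obeys the same four relations and hence the same deterministic recursion with the same leading term. For the record, the proof in the cited reference \cite{DY1} likewise proceeds by expanding $R(\lambda)$ in powers of $\lambda^{-1}$ and solving the resulting recursion; your packaging of the consistency of the overdetermined system (the $(1,1)$-entry) as a one-line determinant identity, rather than an order-by-order verification, is a clean way to close the argument.
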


The unique series $R(\lambda)$ in the above lemma is called the {\it basic matrix resolvent} of~$\mathcal{L}$. 
Following~\cite{DY1, Y}, define $\omega_{i,j} \in \mathcal{A}$ ($i,j\geq1$) via the generating series 
\begin{align}
&
\sum_{i,j\geq 1} \frac{\omega_{i,j}}{\lambda^{i+1} \mu^{j+1}}
=\frac{{\rm tr}\, R(\lambda)R(\mu)-1}{(\lambda-\mu)^2},
\label{taunstr1}
\end{align}
and define $\varphi_j={\rm Coef} \bigl(\lambda^{-j-1},(\Lambda(R(\lambda)))_{21}\bigr)\in \mathcal{A}$, $j\geq1$.

\begin{lemma}[\cite{DY1}]\label{lemmatwo} For an arbitrary solution $(V(x,{\bf s};\e), W(x,{\bf s};\e))$ to the Toda lattice hierarchy~\eqref{todahier}, 
there exists a function $\tau(x,{\bf s};\e)$ such that
\begin{align}
&
\e^2\frac{\partial^2\log \tau(x,{\bf s};\e)}{\partial s_{i} \partial s_{j}}
= \omega_{i,j}(x,{\bf s};\e), \quad i,j\geq1,
\label{taun1}\\
& \frac{\p}{\p s_{j}} \log \frac{\tau(x+\e,{\bf s};\e)}{\tau(x,{\bf s};\e)} = \varphi_j(x,{\bf s};\e), \quad j\geq1,
\label{taun2} \\
&
\frac{\tau(x+\e,{\bf s};\e) \tau(x-\e,{\bf s};\e)}{\tau(x,{\bf s};\e)^2}=W(x,{\bf s};\e),
\label{taun3}
\end{align}
where $\omega_{i,j}(x,{\bf s};\e)$ and $\varphi_j(x,{\bf s};\e)$ mean the substitution of $(V(x,{\bf s};\e), W(x,{\bf s};\e))$ 
in the corresponding elements in~$\mathcal{A}$. 
\end{lemma}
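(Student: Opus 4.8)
The plan is to construct $\log\tau$ by showing that the overdetermined system \eqref{taun1}--\eqref{taun3} is compatible in the ring $\mathcal{A}[[{\bf s}]]$ (equipped with the $x$-shift $\Lambda$), and then integrating. First I would record that the existence of $\tau$ is equivalent to the following closedness and compatibility conditions: (i) the symmetry $\omega_{i,j}=\omega_{j,i}$; (ii) the three-index symmetry $\p_{s_k}\omega_{i,j}=\p_{s_j}\omega_{i,k}$; (iii) the relation $(\Lambda-1)\omega_{i,j}=\e^2\p_{s_i}\varphi_j$ linking \eqref{taun1} and \eqref{taun2}; (iv) the $s$-closedness $\p_{s_i}\varphi_j=\p_{s_j}\varphi_i$; and (v) the relation $(1-\Lambda^{-1})\varphi_j=\p_{s_j}\log W$ linking \eqref{taun2} and \eqref{taun3}. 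Conditions (iii) and (v) follow from \eqref{taun1}--\eqref{taun3} by applying $\p_{s_i}$ to \eqref{taun2} and by factoring $\Lambda+\Lambda^{-1}-2=(1-\Lambda^{-1})(\Lambda-1)$, so they are exactly the integrability requirements. Condition (i) is immediate from \eqref{taunstr1}, since the kernel $\frac{\tr R(\lambda)R(\mu)-1}{(\lambda-\mu)^2}$ is symmetric under $\lambda\leftrightarrow\mu$.

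The engine for (ii)--(v) is the evolution of the basic resolvent along the Toda flows. I would first recast the scalar Lax equations \eqref{todahier} in matrix zero-curvature form: writing $\mathcal{B}_j(\lambda)$ for the matrix realization of $(L^j)_+$ in the $U(\lambda)$-picture, the flow $\e\p_{s_j}L=[(L^j)_+,L]$ becomes $\e\,\p_{s_j}U(\lambda)=\Lambda(\mathcal{B}_j(\lambda))U(\lambda)-U(\lambda)\mathcal{B}_j(\lambda)$. Differentiating the defining equation \eqref{eqres} along $s_j$ and substituting this, one checks that $X:=\e\,\p_{s_j}R(\lambda)-[\mathcal{B}_j(\lambda),R(\lambda)]$ satisfies the same homogeneous equation $\Lambda(X)U-UX=0$ as $R$ itself; since $X={\rm O}(\lambda^{-1})$ and, by \eqref{normres}, respects the tangent constraints of $\{\tr R=1,\det R=0\}$, the uniqueness in Lemma~\ref{lemmaone} (in tangent form) forces $X=0$, giving the commutator flow
\[
\e\,\p_{s_j}R(\lambda)=[\mathcal{B}_j(\lambda),R(\lambda)].
\]
The key structural input to be extracted from the matrix-resolvent formalism of \cite{DY1} is that the $\mathcal{B}_j(\lambda)$ are the nonnegative-$\lambda$-power truncations of $\lambda^{j}R(\lambda)$ (up to a constant matrix correction); it is this identification that will convert commutator-trace identities into exact $(\Lambda-1)$-differences. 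I would also use heavily that $\tr R=1,\ \det R=0$ force $R(\lambda)^2=R(\lambda)$ by Cayley--Hamilton, i.e.\ $R$ is a rank-one projector.

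With the flow in hand I would compute $\e\,\p_{s_k}\bigl[\tr R(\lambda)R(\mu)\bigr]$. Expanding the two commutators and using cyclicity of the trace, the diagonal pieces cancel and the result reduces to
\[
\e\,\p_{s_k}\tr\bigl(R(\lambda)R(\mu)\bigr)=\tr\bigl((\mathcal{B}_k(\lambda)-\mathcal{B}_k(\mu))[R(\lambda),R(\mu)]\bigr).
\]
The polynomial-truncation relation between $\mathcal{B}_k$ and $R$, together with \eqref{eqres} (which encodes the $x$-shift of $R$), then collapses the right-hand side into $(\Lambda-1)$ applied to a symmetric three-resolvent expression. Dividing by $(\lambda-\mu)^2$ and reading off the coefficient of $\lambda^{-i-1}\mu^{-j-1}$ yields simultaneously the three-index symmetry (ii) and the shift-compatibility (iii). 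An entirely analogous computation of $\e\,\p_{s_i}\varphi_j$ from the flow of $(\Lambda R)_{21}$ gives (iv), and matching \eqref{taun3} through $\Lambda+\Lambda^{-1}-2=(1-\Lambda^{-1})(\Lambda-1)$ gives (v). Once (i)--(v) hold, a formal Poincar\'e-lemma argument in $\QQ[x,\e,\e^{-1}][[{\bf s}]]$ integrates \eqref{taun1}--\eqref{taun2} to produce $\log\tau$, with \eqref{taun3} then automatic from (v); the residual freedom of adding an $s$-independent affine-in-$x$ term does not affect the three equations.

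The hardest step is the third one: pinning down the precise algebraic relation between the $\mathcal{B}_k(\lambda)$ and $R(\lambda)$ and then carrying out the trace computation so that every term which is not a total $(\Lambda-1)$-difference cancels. The normalizations $\tr R=1$, $\det R=0$ (hence $R^2=R$) are exactly what make this cancellation work, so the bookkeeping must invoke them at each stage, and the $\e$-degree counting must be tracked carefully to land the factors in (iii) and (v) correctly.
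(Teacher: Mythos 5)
Your skeleton is the right one, and it matches the route this paper actually relies on: the paper gives no argument for Lemma~\ref{lemmatwo} itself but quotes it from~\cite{DY1}, where the proof is exactly of the shape you describe --- reduction to the compatibility conditions (i)--(v), a commutator flow $\e\,\p_{s_j}R(\lambda)=[\mathcal{B}_j(\lambda),R(\lambda)]$ derived from the zero-curvature form of~\eqref{todahier} plus the uniqueness in Lemma~\ref{lemmaone}, and trace manipulations exploiting $R(\lambda)^2=R(\lambda)$. Your list (i)--(v) is precisely the correct set of integrability conditions, the identity $\e\,\p_{s_k}\tr\bigl(R(\lambda)R(\mu)\bigr)=\tr\bigl((\mathcal{B}_k(\lambda)-\mathcal{B}_k(\mu))[R(\lambda),R(\mu)]\bigr)$ is correct, and your tangency checks ($\tr X=0$, $\tr(RX)=0$) are sound.

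However, there are genuine gaps exactly where the weight of the proof sits. First, the structural input you propose to import is misstated: in the conventions of this paper, $\mathcal{B}_j(\lambda)$ is \emph{not} $(\lambda^jR(\lambda))_+$ up to a constant matrix. Already for $j=1$, writing the flow of the wave vector $(\psi(x),\psi(x-\e))^T$ gives $\mathcal{B}_1(\lambda)=\bigl(\begin{smallmatrix}\lambda & -W(x)\\ 1 & V(x-\e)\end{smallmatrix}\bigr)$, while $(\lambda R(\lambda))_+=\bigl(\begin{smallmatrix}\lambda & -W(x)\\ 1 & 0\end{smallmatrix}\bigr)$; the discrepancy ${\rm diag}(0,V(x-\e))$ is field-dependent, does not commute with $R(\lambda)$, and cannot be removed by the residual gauge freedom ($x$-independent multiples of the identity), so it enters every commutator you need to evaluate. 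Second --- and this is the crux --- the collapse of $\tr\bigl((\mathcal{B}_k(\lambda)-\mathcal{B}_k(\mu))[R(\lambda),R(\mu)]\bigr)$ into $(\Lambda-1)$ of a totally symmetric three-resolvent expression, together with the analogous computations behind (iii)--(v), is asserted rather than carried out; in~\cite{DY1} this is the content of the main computational lemmas, where \eqref{eqres}, $R^2=R$, and the \emph{exact} form of $\mathcal{B}_k$ must be combined, and with your misstated $\mathcal{B}_k$ the bookkeeping cannot close. Two smaller holes: before invoking uniqueness you must show that $X=\e\,\p_{s_j}R-[\mathcal{B}_j,R]$ is ${\rm O}(\lambda^{-1})$, which is not automatic since $[\mathcal{B}_j,R]$ has an a priori polynomial-in-$\lambda$ part of degree~$j$; and \eqref{taun3} is \emph{not} ``automatic from (v)'': condition (v) only gives \eqref{taun3} up to an ${\bf s}$-independent factor, so one must additionally solve a second-order difference equation of the form $(\Lambda+\Lambda^{-1}-2)f=\log W|_{{\bf s}={\bf 0}}-(\Lambda+\Lambda^{-1}-2)\log\tau|_{{\bf s}={\bf 0}}$ for a pure function of $x$ and $\e$ --- a formally solvable but necessary extra step, consistent with the ambiguity $\tau\mapsto e^{c+bn+\sum_j a_{j-1}s_j}\tau$ recorded after the lemma.
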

The function $\tau(x,{\bf s};\e)$ is determined by the solution $(V(x,{\bf s};\e), W(x,{\bf s};\e))$ up to 
\beq
\tau(x,{\bf s};\e) \mapsto e^{c + b n + \sum_{j\geq 1} a_{j-1} s_j} \tau(x,{\bf s};\e),
\eeq
where $c,b$ and $a$'s can depend on~$\e$.
We call the function $\tau(x,{\bf s};\e)$ 
the {\it DZ type tau-function} of the solution $(V(x,{\bf s};\e), W(x,{\bf s};\e))$ 
to the Toda lattice hierarchy, for short a Toda tau-function of the solution $(V(x,{\bf s};\e), W(x,{\bf s};\e))$. The 
elements $\omega_{i,j}\in\mathcal{A}$ ($i,j\geq1$) are called two-point correlation functions
of the Toda lattice hierarchy.

Our definition of a Toda tau-function agrees with 
the one given in~\cite{CDZ, DZ1}. Indeed, 
firstly, recall from~\cite{DY1} that the two-point correlation functions
$\omega_{i,j}$ are associated with the 
tau-symmetric hamiltonian densities for the Toda lattice hierarchy; secondly, 
 by taking the dispersionless 
limit $\e\to 0$ in the above definition (symbolically put $v=\lim_{\e\to0} V(x)$ and $w=\lim_{\e\to 0} W(x)$) and by using Lemma~\ref{lemmaone} (see~\cite{DY1, Y} for some more details), 
one immediately obtains that  
\begin{align}
& \sum_{i,j\geq 1} \frac{\omega_{i,j}^{[0]}}{\lambda^{i+1} \mu^{j+1}}
=\frac{B(\lambda) B(\mu)((\lambda-v)(\mu-v)-4w)-1}{2 \, (\lambda-\mu)^2},
\label{taun1disp}\\
&
\frac1\lambda + \sum\limits_{i\geq 1} \frac{\varphi_i^{[0]}}{\lambda^{i+1}}  =  B(\lambda),
\label{taun2disp} 
\end{align}
where 
\beq
B(\lambda) = \frac1{\sqrt{(\lambda-v)^2-4w}}
=\frac{1}{\lambda}+\frac{v}{\lambda ^2}+\frac{v^2+2 w}{\lambda ^3}+\frac{v^3+6 v w}{\lambda ^4}+{\rm O}\bigl(\lambda^{-5}\bigr),
\eeq
and the verification in the dispersionless limit, i.e. of the following equalities 
\beq
\omega_{i,j}^{[0]} = i! j! \Omega^{[0]}_{2,i-1;2,j-1}, \quad \varphi_{i}^{[0]} = i! \theta_{2,i-1}, \quad i,j\geq1,
\eeq
(cf.~\eqref{theta2z}, \eqref{two-pt}, \eqref{normalflow})
is straightforward.

%
%

\subsection{GUE partition function as a Toda tau-function}
Recalling Gaussian integral
\beq\label{Gaussian}
\int_{{\mathcal H}(n)} e^{-\frac1{2\epsilon} \tr\, M^2} dM=2^{\frac{n}2} (\pi\epsilon)^{\frac{n^2}2}, \quad n\geq1,
\eeq
we get $Z^{\rm GUE1}_n({\bf 0};\e) \equiv1$.  
For $Z^{\rm GUE1}_n({\bf s};\e)$, there is the well-known formula
(cf.~\cite{Deift, DY1, mehta})
\begin{align}
\int_{{\mathcal H}(n)} e^{-\frac1{\epsilon} \tr Q(M;{\bf s})} dM
=& \frac1{n!} {\rm Vol} (U(n)/ U(1)^n)\int_{\RR^n}  \Delta_n(\lambda_1,\dots,\lambda_n)^2 e^{-\frac1{\epsilon} 
\sum_{k=1}^n Q(\lambda_k;{\bf s})}d\lambda_1 \cdots d\lambda_n, \nn
\end{align}
where
$$
\Delta_n(\lambda_1,\dots,\lambda_n)=\prod_{1\leq i<j\leq n} (\lambda_i-\lambda_j), \quad n\geq1.
$$

One can apply the theory of orthogonal polynomials (cf.~e.g.~\cite{Deift})
for a further computation of $Z^{\rm GUE1}_n({\bf s};\e)$.
Let $(\cdot,\cdot)$ be an inner product on the space of polynomials defined by 
\beq\label{inner}
(f,g)=\int_{-\infty}^{+\infty} f(\lambda)  g(\lambda) e^{-\frac1{\epsilon} Q(\lambda; {\bf s})}d\lambda, \quad \forall\, f,g.
\eeq
Let
\beq\label{ortopol1}
p_m=p_m(\lambda; {\bf s};\e)=\lambda^m+a_{1m}({\bf s};\e) \lambda^{m-1}+\dots+a_{mm}({\bf s};\e), \quad m\geq0,
\eeq
be a system of monic polynomials orthogonal with respect to $(\cdot,\cdot)$, i.e.,
\beq\label{ortopol2}
(p_{m_1}(\lambda; {\bf s};\e), p_{m_2}(\lambda; {\bf s};\e)) =: h_{m_1}({\bf s};\e) \delta_{m_1 m_2}, \quad \forall\, m_1,m_2\geq0.
\eeq
Observing that $\Delta_n(\lambda_1,\dots,\lambda_n)$ can be written into the form
\beq
\Delta_n(\lambda_1,\dots,\lambda_n)=\det  
\left( \begin{array}{cccc} p_0(\lambda_1;{\bf s};\e) & p_0(\lambda_2;{\bf s};\e) & \dots & p_0(\lambda_n;{\bf s};\e)\\
p_1(\lambda_1;{\bf s};\e) & p_1(\lambda_2;{\bf s};\e) & \dots & p_1(\lambda_n;{\bf s};\e)\\
\cdot & \cdot & \dots & \cdot\\
\cdot & \cdot & \dots & \cdot\\
\cdot & \cdot & \dots & \cdot\\
p_{n-1}(\lambda_1;{\bf s};\e) & p_{n-1}(\lambda_2;{\bf s};\e) & \dots & p_{n-1}(\lambda_n;{\bf s};\e)
\end{array}\right)_{n\times n}, 
\eeq
we find
\beq\label{final}
\int_{{\mathcal H}(n)} e^{-\frac1{\epsilon} \tr \, Q(M;{\bf s})} dM={\rm Vol} (U(n)/U(1)^n) h_0({\bf s};\e) \cdots h_{n-1}({\bf s};\e), \quad n\geq1.
\eeq

At ${\bf s}=0$ the orthogonal polynomials $p_m(\lambda; {\bf 0};\e)$ have the explicit expressions
\beq
p_m(\lambda; {\bf 0})=\epsilon^{\frac{m}2} {\rm He}_m(\lambda/\e^{1/2}), \quad m\geq0,
\eeq
where ${\rm He}_m(s)$ are the hermite polynomials. Recalling that 
\beq
\int_{-\infty}^\infty {\rm He}_{m_1}(t) {\rm He}_{m_2}(t) e^{-\frac12 t^2} dt=\sqrt{2\pi} m_1! \delta_{m_1m_2}.
\eeq
we find 
\beq\label{hm0}
h_m({\bf s}={\bf 0};\e)= \epsilon^{m+\frac12} \sqrt{2\pi} m!.
\eeq
Using \eqref{Gaussian}, \eqref{final} and~\eqref{hm0}, we obtain 
\beq
{\rm Vol} (U(n)/U(1)^n)=\frac{\pi^{\frac{n^2-n}2}} {\prod_{j=1}^{n-1} j!} \,.
\eeq
Therefore, 
\beq\label{gue}
Z_n^{\rm GUE1}({\bf s}; \epsilon) = 
\frac{(2\pi)^{-\frac{n}2} \epsilon^{-\frac{n^2}2}}{\prod_{j=1}^{n-1} j!} h_0({\bf s};\e) \cdots h_{n-1}({\bf s};\e), \quad n\geq1.
\eeq
We also define $Z_0^{\rm GUE1}({\bf s}; \epsilon)\equiv1$.

The orthogonal polynomials $p_m(\lambda;{\bf s}; \epsilon)$ satisfy the 
the {\it three-term recurrence relation}:
\beq\label{lax}
 p_{m+1}(\lambda;{\bf s}; \epsilon) + V_m({\bf s};\e) p_m(\lambda;{\bf s}; \epsilon) + W_m({\bf s};\e) p_{m-1}(\lambda;{\bf s}; \epsilon)= \lambda p_m(\lambda;{\bf s}; \epsilon),  
 \quad m\geq0,
\eeq
for some functions $V_m({\bf s};\e)$ and $W_m({\bf s};\e)$ $(m\geq0)$, 
with $p_{-1}(\lambda;{\bf s}; \epsilon):\equiv0$ and $W_0({\bf s};\e):\equiv0$. The functions 
$V_m({\bf s};\e)$ and $W_m({\bf s};\e)$ are well known to satisfy 
\beq
V_m({\bf 0};\e)= 0 , \quad W_m({\bf 0};\e)= \e m, \quad m\geq0. 
\eeq
The equality
\beq
(\lambda p_{m_1}, p_{m_2})=(p_{m_1}, \lambda p_{m_2})
\eeq
then implies that 
\beq\label{wn}
W_m({\bf s};\e)=\frac{h_m({\bf s};\e)}{h_{m-1}({\bf s};\e)} = 
\frac{Z_{m+1}^{\rm GUE1}({\bf s}; \epsilon)Z_{m-1}^{\rm GUE1}({\bf s}; \epsilon)}{Z_{m}^{\rm GUE1}({\bf s}; \epsilon)^2} + {\rm correction}, \quad m\geq1.
\eeq

The three-term recurrence relation tells that $p_0,p_1,\dots$ are eigenvectors of the difference operator 
$L=\Lambda + V_n({\bf s};\e)+W_n({\bf s};\e) \Lambda^{-1}$ with $\Lambda:f(n) \mapsto f(n+1)$ being the shift operator. 
Denote again by~$L$ the corresponding tri-diagonal matrix, and denote  ${\bf p} = (p_0, p_1,\dots)$. 
For a square matrix $X=(X_{i,j})$, denote 
$$
X_-=(X_{i,j})_{i<j}, \quad X_+=(X_{i,j})_{i\geq j}, \quad X=X_+ + X_-.
$$

\begin{lemma}\label{Lax00} 
For arbitrary $n\geq0$, the following two statements are true: a) the  
polynomial $p_n(\lambda;{\bf s}; \epsilon)$ satisfies 
\beq\label{dt}
\epsilon\frac{\p p_n(\lambda;{\bf s}; \epsilon)}{\p s_j}  -  (A_j {\bf p})_n(\lambda;{\bf s}; \epsilon)=0, \quad A_j := - (L^{j})_-, \quad j\geq 1;
\eeq
b) we have
\beq\label{dtdn}
\e \frac{\p }{\p s_j} \log \frac{Z_{n+1}^{\rm GUE1}({\bf s}; \epsilon)}{Z_n^{\rm GUE1}({\bf s}; \epsilon)} = (L^j)_{nn}, \quad j\geq1.
\eeq
\end{lemma}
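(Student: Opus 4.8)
The plan is to prove both statements directly from the orthogonal-polynomial description, by differentiating the defining relations with respect to $s_j$. The one computation that drives everything is the derivative of the weight: since $\p Q(\lambda;{\bf s})/\p s_j = -\lambda^j$ by~\eqref{pot}, we have $\p_{s_j} e^{-\frac1\e Q(\lambda;{\bf s})} = \frac1\e \lambda^j e^{-\frac1\e Q(\lambda;{\bf s})}$, and hence for the inner product~\eqref{inner} and any polynomials $f,g$ (possibly ${\bf s}$-dependent)
\[
\p_{s_j}(f,g) = (\p_{s_j}f, g) + (f, \p_{s_j}g) + \tfrac1\e (\lambda^j f, g).
\]
I would record at the outset the two facts used repeatedly: (i) $p_n$ is monic with leading coefficient $1$ independent of ${\bf s}$, so $\p_{s_j}p_n$ is a polynomial of degree $\le n-1$; and (ii) the recurrence~\eqref{lax} says exactly that $\lambda\,{\bf p} = L\,{\bf p}$ for the tridiagonal matrix $L$, whence $\lambda^j p_n = (L^j {\bf p})_n = \sum_k (L^j)_{nk}\, p_k$ and, pairing against $p_k$ and using~\eqref{ortopol2}, $(\lambda^j p_n, p_k) = (L^j)_{nk}\, h_k$.

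For part~(a), by fact~(i) I expand $\e\,\p_{s_j}p_n = \sum_{k=0}^{n-1} c_{nk}\, p_k$ and pin down the coefficients by differentiating the orthogonality relation $(p_n, p_k)=0$ for $k<n$, which gives
\[
0 = (\p_{s_j}p_n, p_k) + (p_n, \p_{s_j}p_k) + \tfrac1\e (\lambda^j p_n, p_k).
\]
The middle term vanishes because $\deg \p_{s_j}p_k \le k-1 < n$, while the first term equals $\frac1\e c_{nk} h_k$ by orthogonality; using fact~(ii) for the last term I obtain $c_{nk} = -(L^j)_{nk}$, so that $\e\,\p_{s_j}p_n = -\sum_{k<n}(L^j)_{nk}\,p_k$. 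This is exactly $(A_j{\bf p})_n$ with $A_j = -(L^j)_-$, where $(L^j)_-$ is the negative-shift part of the operator $L^j$ (the strictly lower-triangular part of the matrix)—the part which alone produces a polynomial of degree $<n$, consistent with fact~(i).

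For part~(b) I differentiate $h_n = (p_n,p_n)$ from~\eqref{ortopol2}. The two cross terms $(\p_{s_j}p_n, p_n)$ and $(p_n, \p_{s_j}p_n)$ vanish by fact~(i), leaving $\p_{s_j}h_n = \frac1\e(\lambda^j p_n, p_n) = \frac1\e (L^j)_{nn}\, h_n$, i.e. $\e\,\p_{s_j}\log h_n = (L^j)_{nn}$. Finally~\eqref{gue} gives $Z_{n+1}^{\rm GUE1}/Z_n^{\rm GUE1} = (\text{a factor depending only on } n,\e)\cdot h_n$, so the ${\bf s}$-independent prefactor drops out under $\p_{s_j}\log$ and $\e\,\p_{s_j}\log (Z_{n+1}^{\rm GUE1}/Z_n^{\rm GUE1}) = \e\,\p_{s_j}\log h_n = (L^j)_{nn}$, as claimed.

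The computation is short once set up; the only real care is the bookkeeping of polynomial degrees (to kill the unwanted inner products via fact~(i)) and the identification of $\lambda^j p_n$ through the powers of $L$. I expect the main subtlety to be fixing the triangularity/shift convention so that $-(L^j)_-$ names precisely the part of $-L^j$ landing in the span of $p_0,\dots,p_{n-1}$, rather than any analytic difficulty.
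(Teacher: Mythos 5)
Your proof is correct and follows essentially the same route as the paper's: expand $\p_{s_j}p_n$ in the basis $p_0,\dots,p_{n-1}$, differentiate the orthogonality relations (picking up the extra term $\tfrac1\e(\lambda^j f,g)$ from the weight), and use the three-term recurrence~\eqref{lax} to identify $(\lambda^j p_n,p_k)$ with $(L^j)_{nk}h_k$, then read off part (b) from $Z^{\rm GUE1}_{n+1}/Z^{\rm GUE1}_n = C(n,\e)\,h_n$. Your reading of $(L^j)_-$ as the negative-shift part (entries with column index less than row index) is indeed the one needed for the statement, consistent with the difference-operator convention given in the Introduction.
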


\begin{proof} Write
$$
\frac{\p p_n(\lambda;{\bf s}; \epsilon)}{\p s_j} =\sum_{m=0}^{n-1} A_{m n}^{(j)}({\bf s}; \epsilon)  p_m(\lambda;{\bf s}; \epsilon)
$$
for some coefficients $A_{mn}^{(j)}=A_{m n}^{(j)}({\bf s}; \epsilon)$. Differentiating
the orthogonality $(p_m, p_n) \equiv h_n \delta_{mn}$
with respect to~$s_j$, we find that for $m<n$ 
\beq
A_{mn}^{(j)}h_m +\frac1{\epsilon}  (\lambda^{j} p_n, p_m)=0, 
\eeq
and for $m=n$,
\beq
\frac1{\epsilon}  ( \lambda^{j} p_n, p_n) = \frac{\p h_n}{\p s_j} 
= \frac{\p }{\p s_j} \log \frac{Z_{n+1}^{\rm GUE1}({\bf s}; \epsilon)}{Z_n^{\rm GUE1}({\bf s}; \epsilon)}.
\eeq
Then by using~\eqref{lax} we obtain~\eqref{dt} and~\eqref{dtdn}.
\end{proof}

If particular, from~\eqref{dtdn} we see that for all $n\geq0$, 
\beq\label{vn}
\e \frac{\p }{\p s_1} \log \frac{Z_{n+1}^{\rm GUE1}({\bf s}; \epsilon)}{Z_n^{\rm GUE1}({\bf s}; \epsilon)} = V_n({\bf s};\e), \quad j\geq1.
\eeq

It follows from Lemma~\ref{Lax00} that the difference operator~$L$ satisfies the Toda lattice hierarchy~\eqref{todahier}. 
Indeed, it immediately follows from~\eqref{dt} the Lax equations for the square matrix~$L$; 
by using~\eqref{wn}, \eqref{vn} and the facts mentioned in the introduction that 
$Z_n^{\rm GUE1}({\bf s};\e)$ and $\log Z_n^{\rm GUE1}({\bf s};\e)$ belong to $\QQ\bigl[n,\e,\e^{-1}\bigr][[{\bf s}]]$, we know that 
$W_n({\bf s};\e), V_n({\bf s};\e)$ belong to $\QQ\bigl[n,\e,\e^{-1}\bigr][[{\bf s}]]$; 
we therefore conclude that the Lax equations~\eqref{todahier} hold for the difference operator~$L$ (namely with $n$ being viewed as 
an indeterminate), i.e., $(V_n({\bf s};\e), W_n({\bf s};\e))$ is a solution to the Toda lattice hierarchy.

We note that equalities \eqref{wn}, \eqref{vn}, \eqref{dtdn} hold true in $\QQ\bigl[n,\e,\e^{-1}\bigr][[{\bf s}]]$, 
where the right-hand side of~\eqref{dtdn} should be viewed as ${\rm res} \, L^{j}$. By
using~\eqref{dtdn} and the compatibility between \eqref{taun1}, \eqref{taun2} we have 
\beq
\sum_{i,j\geq 1} \frac{1}{\lambda^{i+1} \mu^{j+1}} (\Lambda-1) \biggl(\frac{\partial^2\log Z_{n}^{\rm GUE1}({\bf s}; \epsilon)}{\partial s_{i} \partial s_{j}}\biggr)
= (\Lambda-1) \biggl(\frac{{\rm tr}\, R_n(\lambda)R_n(\mu)-1}{(\lambda-\mu)^2}\biggr).
\eeq
Since $Z_{n}^{\rm GUE1}({\bf s}; \epsilon)\in\QQ\bigl[n,\e,\e^{-1}\bigr][[{\bf s}]]$ and since $Z_{n}^{\rm GUE1}({\bf s}; \epsilon)$ is divisible by~$n$, 
we have
\beq\label{ssZ}
\sum_{i,j\geq 1} \frac{1}{\lambda^{i+1} \mu^{j+1}} \frac{\partial^2\log Z_{n}^{\rm GUE1}({\bf s}; \epsilon)}{\partial s_{i} \partial s_{j}}
= \frac{{\rm tr}\, R_n(\lambda)R_n(\mu)-1}{(\lambda-\mu)^2}.
\eeq
In particular, $W_n({\bf s};\e) = \e^2\frac{\p^2 \log Z_{n}^{\rm GUE1}({\bf s}; \epsilon)}{\p s_1 \p s_1}$. 
Since $\log Z_n^{\rm GUE1}({\bf s}; \epsilon)$ and $\log Z(x,{\bf s};\e)$ differ by a function that only depends on $x,\e$, 
we see that $W_n({\bf s};\e)$ defined in this section coincides with $W(x, {\bf s};\e)$ defined in the Introduction. 

We see also from \eqref{wn}, \eqref{dtdn}, \eqref{ssZ} that $Z_{n}^{\rm GUE1}({\bf s}; \epsilon)$ almost satisfies the definition 
of a Toda tau-function of the solution $(V_n({\bf s};\e), W_n({\bf s};\e))$, except in~\eqref{wn} an extra term appears.  
It is easy to show (cf.~\cite{AIKZ}, \cite{WW}) that the definition for the correction GUE 
partition function (cf.~\eqref{defcorrGUEfree}, \eqref{globaldefZ}) eliminates the extra term and 
keep the other properties hold. We therefore arrive at 
 the following proposition summarizing the above.
\begin{prop}[cf.~\cite{AvM, DY1}]
The vector-valued function 
$(V(x,{\bf s}; \e), W(x,{\bf s};\e))$ defined in~\eqref{defVWintro} is the unique solution to the 
Toda lattice hierarchy~\eqref{todahier} specified by the initial condition~\eqref{inigue}, 
and the GUE partition function $Z(x,{\bf s};\e)$ is the tau-function of this solution to the Toda lattice hierarchy.
\end{prop}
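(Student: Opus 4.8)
The plan is to assemble the facts already established in Section~\ref{section2} and to supply the one genuinely new ingredient, the uniqueness statement. I would split the argument into three claims: that $(V,W)$ solves~\eqref{todahier}, that it is the unique solution with the prescribed initial data, and that $Z$ is its tau-function.

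For the first claim I would start from Lemma~\ref{Lax00}: equation~\eqref{dt} gives the linear evolution $\e\,\p_{s_j}{\bf p}=A_j{\bf p}$ of the orthogonal polynomials with $A_j=-(L^j)_-$. Differentiating the three-term recurrence $L{\bf p}=\lambda{\bf p}$ with respect to $s_j$ and substituting~\eqref{dt} eliminates the common factor ${\bf p}$, since the $p_m$ are linearly independent over the ground ring, leaving the matrix Lax equation $\e\,\p_{s_j}L=[A_j,L]$; with the help of $[L^j,L]=0$ this recasts into the form~\eqref{todahier}. To pass from the Jacobi matrix to the difference operator I would invoke the polynomiality $V_n,W_n\in\QQ[n,\e,\e^{-1}][[{\bf s}]]$ recorded after Lemma~\ref{Lax00}: the entrywise identities hold for all integers $n\ge0$ and each side is polynomial in $n$, hence they are polynomial identities in $n$, which is exactly~\eqref{todahier} read with $n$ an indeterminate and the index shifts $n\mapsto n\pm1$ realized as $\Lambda^{\pm1}$. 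I would then identify this $(V_n,W_n)$ with the $(V,W)$ of~\eqref{defVWintro}: \eqref{vn} furnishes $V$, while \eqref{ssZ} at $i=j=1$ furnishes $W$, and because $\log Z$ and $\log Z_n^{\mathrm{GUE1}}$ differ only by a function of $x,\e$ whose $s_1$-derivatives vanish, the two definitions agree. The initial data~\eqref{inigue} then comes from the known values $V_n({\bf 0})=0$, $W_n({\bf 0})=\e n=x$ of the recurrence coefficients.

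For uniqueness I would argue at the level of formal power series in ${\bf s}$. Each flow~\eqref{todahier} expresses $\p_{s_j}V$ and $\p_{s_j}W$ as difference polynomials in $V,W$, so the whole system determines every ${\bf s}$-Taylor coefficient of $(V,W)$ recursively from its value at ${\bf s}={\bf 0}$. Thus~\eqref{inigue} singles out at most one solution, and the $(V,W)$ constructed above realizes it.

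For the tau-function claim I would verify the three defining relations of Lemma~\ref{lemmatwo} for $\tau=Z$. Relation~\eqref{taun1} is encoded in~\eqref{ssZ}, and~\eqref{taun2} in~\eqref{dtdn}; in both cases the correction of~\eqref{defcorrGUEfree}, depending only on $x,\e$, is annihilated by the $s$-derivatives and so is invisible. The remaining relation~\eqref{taun3} is where the actual work sits, and I expect it to be the main obstacle: for $Z_n^{\mathrm{GUE1}}$ equation~\eqref{wn} reproduces $W$ only up to the explicit ``correction'' term. The resolution is that the correction built into the corrected partition function~\eqref{defcorrGUEfree}, equivalently the normalization~\eqref{globaldefZ} through the Barnes $G$-function asymptotics, is tailored precisely to absorb that term, so that $W=Z(x+\e)Z(x-\e)/Z(x)^2$ holds on the nose while~\eqref{taun1}--\eqref{taun2} survive (the correction is killed by $\p_{s_i}\p_{s_j}$ and by $\p_{s_j}(\Lambda-1)$). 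With all three relations in hand, $Z$ is a tau-function of $(V,W)$, and by the ambiguity clause of Lemma~\ref{lemmatwo} it is exactly the one normalized by~\eqref{defcorrGUEfree}.
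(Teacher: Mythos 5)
Your proposal is correct and follows essentially the same route as the paper: Lemma~\ref{Lax00} together with polynomiality in~$n$ to obtain the Lax equations with $n$ as an indeterminate, identification of $(V_n,W_n)$ with the pair~\eqref{defVWintro} via \eqref{vn} and \eqref{ssZ}, and verification of the tau-function relations \eqref{taun1}--\eqref{taun3}, with the correction in \eqref{defcorrGUEfree}/\eqref{globaldefZ} absorbing the extra term in~\eqref{wn}. Your only additions are explicit versions of steps the paper leaves implicit, namely the derivation of the matrix Lax equation from~\eqref{dt} and the formal-power-series recursion establishing uniqueness from the initial data~\eqref{inigue}, and both are sound.
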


\section{Proof of Theorem~\ref{thm1}}\label{section3}

The goal of this section is to prove Theorem~\ref{thm1}.  To this end, we will first need to 
recall the definition of $F^{\mathbb{P}^1}_g$ ($g\geq1$) that appear in the context of Theorem~\ref{thm1}.

Denote by~$\F^{\mathbb{P}^1}({\bf T};\e)$ the {\it free energy of GW invariants of~$\mathbb{P}^1$}, 
and by $Z^{\mathbb{P}^1}({\bf T};\e):=\exp\bigl(\F^{\mathbb{P}^1}({\bf T};\e)\bigr)$ the {\it partition function} of 
these GW invariants:
\beq\label{freeenergy}
\F^{\mathbb{P}^1}({\bf T};\e) \:=\sum_{d,k\geq 0} \frac1{k!} 
\sum_{1\leq \alpha_1,\dots,\alpha_k\leq 2, \atop i_1,\dots,i_k\geq 0}  T^{\alpha_1, i_1} \cdots T^{\alpha_k,i_k} 
\sum_{g\ge0} \e^{2g-2} \langle\tau_{i_1}(\alpha_1)\cdots\tau_{i_k}(\alpha_k)\rangle_{g,d},
\eeq
where ${\bf T}:=(T^{\alpha,i})_{\alpha=1,2,\,i\geq 0}$ and  
$\langle\tau_{i_1}(\alpha_1)\cdots\tau_{i_k}(\alpha_k)\rangle_{g,d}$ denote the genus~$g$ and degree~$d$ 
GW invariants of~$\mathbb{P}^1$ (cf.~e.g.~\cite{DZ1,OP1,OP2}). We denote by 
$\F^{\mathbb{P}^1}_g({\bf T}):={\rm Coef}\bigl( \e^{2g-2},\F^{\mathbb{P}^1}({\bf T};\e)\bigr)$ the genus~$g$ part 
of~$\F^{\mathbb{P}^1}({\bf T};\e)$, sometimes called for short {\it the genus~$g$ free energy of GW invariants of~$\mathbb{P}^1$}. 
It was conjectured by Dubrovin~\cite{Du93}, Eguchi--Yang~\cite{EY} (cf.~\cite{Getzler, Zhang}), 
and proved in~\cite{DZ1, OP1, OP2} that the functions  
\beq
V^{\mathbb{P}^1}({\bf T};\e) := \e (\Lambda-1) \frac{\p \log Z^{\mathbb{P}^1}({\bf T};\e)}{\p T^{2,0}}, \quad 
W^{\mathbb{P}^1}({\bf T};\e) :=\e^2\frac{\p^2 \log Z^{\mathbb{P}^1}({\bf T};\e)}{\p T^{2,0} \p T^{2,0}},
\eeq
satisfy the Toda lattice hierarchy~\eqref{todahier} with
\beq\label{normalflowold}
\frac{\p}{\p T^{2,p}}=\frac1{(p+1)!}\frac{\p}{\p s_{p+1}}, \quad p\geq0,
\eeq
and with $L := \Lambda + V^{\mathbb{P}^1}({\bf T};\e) + W^{\mathbb{P}^1}({\bf T};\e) \Lambda^{-1}$,
and, moreover, $Z^{\mathbb{P}^1}({\bf T};\e)$ is a tau-function of the 
solution $(V^{\mathbb{P}^1}({\bf T};\e), W^{\mathbb{P}^1}({\bf T};\e))$
to~\eqref{todahier}. Here $\Lambda=\exp(\e \p_x)$ and $x=T^{1,0}$.
The partition function $Z^{\mathbb{P}^1}({\bf T};\e)$ also 
 satisfies the following dilaton and string equations:
\begin{align}
& \sum_{\alpha=1}^2 \sum_{p\geq0} \bigl(T^{\alpha,p}-\delta^{\alpha,1}\delta^{p,1}\bigr) \frac{\p Z^{\mathbb{P}^1}({\bf T};\e)}{\p T^{\alpha,p}} 
+\e \frac{\p Z^{\mathbb{P}^1}({\bf T};\e)}{\p \e} + \frac1{12}Z^{\mathbb{P}^1}({\bf T};\e) = 0, \label{dilatonp1}\\
&  
\sum_{\alpha=1}^2 \sum_{p\geq1} \bigl(T^{\alpha,p}-\delta^{\alpha,1}\delta^{p,1}\bigr) \frac{\p Z^{\mathbb{P}^1}({\bf T};\e)}{\p T^{\alpha,p-1}} + \frac{T^{1,0}T^{2,0}}{\e^2} Z^{\mathbb{P}^1}({\bf T};\e)
=0. \label{stringp1}
\end{align}

Let ${\bf v}^{\mathbb{P}^1}({\bf T})$ be the {\it topological solution} to~\eqref{phdefp1}, 
that is the unique power series in $T^{\alpha,q}$, $\alpha=1,2$, $q>0$, satisfying~\eqref{phdefp1} and 
\beq
v^{\alpha,\mathbb{P}^1}({\bf T})\big|_{T^{\beta,q}=0, q>0, \beta=1,2} = T^{\alpha,0},\quad \alpha=1,2.
\eeq
As mentioned in the Introduction, the genus zero free energy of GW 
invariants of~$\mathbb{P}^1$ equals~\cite{Du96} that for the principal hierarchy subjected to the 
topological solution, i.e.,
\beq
\F^{\mathbb{P}^1}_0({\bf T}) = 
\frac12 \sum_{\alpha,\beta=1}^2 \sum_{p,q\geq0} \bigl(T^{\alpha,p}-\delta^{\alpha,1}\delta^{p,1}\bigr) \bigl(T^{\beta,q}-\delta^{\beta,1}\delta^{q,1}\bigr)  
\Omega_{\alpha,p;\beta,q}^{[0]}\bigl({\bf v}^{\mathbb{P}^1}({\bf T})\bigr).
\eeq
The higher genus free energies $\F^{\mathbb{P}^1}_g({\bf T})$, $g\geq1$, 
admit the {\it jet-variable representation} (cf.~\cite{DW, DZ-norm, DZ1, EYY, Getzler2}).
Namely, there exist functions 
$F^{\PP^1}_g({\bf v}_0, {\bf v}_1,\dots, {\bf v}_{3g-2})$ ($g\geq1$) 
with ${\bf v}_m=(v_m,u_m)=(v^1_m,v^2_m)$ and ${\bf v}_0={\bf v}$, 
such that 
\beq\label{fgfmgequalp1}
\F_g^{\PP^1}({\bf T}) = 
F_g^{\PP^1}\biggl({\bf v}^{\PP^1}({\bf T}), \frac{\p {\bf v}^{\PP^1}({\bf T})}{\p x},\dots,\frac{\p^{3g-2} {\bf v}^{\PP^1}({\bf T})}{\p x^{3g-2}}\biggr).
\eeq
By using Virasoro constraints, Dubrovin and Zhang~\cite{DZ-norm, DZ1} obtained the following 
loop equation: 
\begin{align}
\label{loop}
&\sum_{r\geq 0}\biggl({\p \Delta F \over \p v_r}
\Bigl({v-\lambda\over D}\Bigr)_r - 2 {\p \Delta F \over \p u_r} \Bigl({1\over D}\Bigr)_r \biggr)\\
&\quad +\sum_{r\geq 1} \sum_{k=1}^r \binom{r}{k}
\Bigl({1\over \sqrt{D}}\Bigr)_{k-1} \biggl( {\p \Delta F \over \p v_r}
\left(v\!-\!\lambda\over \sqrt{D}\right)_{r-k+1}-2 {\p \Delta F \over \p u_r} \left(1\over \sqrt{D}\right)_{r-k+1} \biggr)
\nn\\
&=D^{-3} e^{u} \left(4e^{u}+(v-\lambda)^2\right)\nn\\
&\quad - \epsilon^2 \sum_{k,l\geq0} \biggl( \frac14 S(\Delta F,v_k,v_l)
\left(v-\lambda\over \sqrt{D}\right)_{k+1}  \left(v-\lambda\over
\sqrt{D}\right)_{l+1} \nn\\
&\quad - S(\Delta F,v_k,u_l) \left(v-\lambda\over
\sqrt{D}\right)_{k+1} \left(1\over \sqrt{D}\right)_{l+1}+ S(\Delta F,u_k,u_l) \,
\left(1\over \sqrt{D}\right)_{k+1} \left(1\over \sqrt{D}\right)_{l+1}\biggr)
\nn\\
&\quad -\frac{\epsilon^2} 2 \sum_{k\geq0}  \biggl( {\p \Delta F \over \p v_k}
  \p^{k+1} \biggl(e^{u} {4 e^{u}(v-\lambda) u_1 - ((v-\lambda)^2 + 4 e^{u}) v_1 \over D^3}\biggr)  \nn\\
&\qquad\qquad\qquad +{\p \Delta F \over \p u_k} \p^{k+1}\biggl(e^{u} {4 (v-\lambda) \, v_1 - ((v-\lambda)^2 + 4 e^{u}) u_1\over D^3}\biggr)\biggr),\nn
\end{align}
where $\Delta F:= \sum_{g\geq 1} \epsilon^{2g} F_g^{\mathbb{P}^1}({\bf v}_0, {\bf v}_1,\dots, {\bf v}_{3g-2})$,
$D= (v-\lambda)^2 - 4 e^{u}$,
$S(f,a,b):=
{\p^2 f\over \p a \p b}+
{\p f \over \p a}
{\p f \over \p b}$,
and $f_r$ stands for $\p^r(f)$ with 
\beq
\p:=\sum_{\alpha=1,2} \sum_{m\geq0} v^\alpha_{m+1} \frac{\p}{\p v^{\alpha}_m}.
\eeq
It is also shown in~\cite{DZ-norm, DZ1} the solution $\Delta F$ to~\eqref{loop} is unique up to
a sequence of additive constants for $F_g^{\mathbb{P}^1}$ ($g\geq1$), that for $g\geq2$ can be fixed by
the following equation:
\begin{align}
& \sum_{\alpha=1}^2 \sum_{m=1}^{3g-2} m v^\alpha_m \frac{\p F_g^{\mathbb{P}^1}({\bf v}_0, {\bf v}_1,\dots, {\bf v}_{3g-2})}{\p v^\alpha_m} 
= (2g-2) F_g^{\mathbb{P}^1}({\bf v}_0, {\bf v}_1,\dots, {\bf v}_{3g-2}) + \frac{\delta_{g,1}}{12}, ~ g\ge1.\label{homogenfgp1}
\end{align}
Moreover, for $g\geq2$, $F_g^{\mathbb{P}^1}({\bf v}_0, {\bf v}_1,\dots, {\bf v}_{3g-2})$ 
are polynomials of ${\bf v}_{2}, \dots, {\bf v}_{3g-2}$ and have rational dependence
in~${\bf v}_{1}$ (cf.~e.g.~\cite{DZ-norm, DZ1}). 
In particular, for $g=1$, 
\begin{align}\label{f1gw}
F_1^{\mathbb{P}^1}({\bf v}_0, {\bf v}_1) = \frac1{24} \log \bigl((v_1)^2-e^{u} (u_1)^2\bigr) - \frac1{24} u.
\end{align}
These unique functions $F_g^{\mathbb{P}^1}$ ($g\geq1$) are the ones used in the context of Theorem~\ref{thm1}.

We are ready to prove Theorem~\ref{thm1}.

\begin{proof}[Proof of Theorem~\ref{thm1}.]
Start with genus zero. Let $(v(x,{\bf s}), u(x,{\bf s}))$ be the unique solution 
to the initial value problem \eqref{phdefp1}, \eqref{ini2022}.
The Riemann invariants for the principal hierarchy~\eqref{phdefp1} are given by
\beq
R_1({\bf v})=v+2e^{u/2}, \quad R_2({\bf v})=v-2e^{u/2}.
\eeq
Since $(R_i)_x$ ($i=1,2$) do not vanish at generic~$x=x_0$, 
the solution $(v(x,{\bf s}), u(x,{\bf s}))$ belongs to the class of {\it monotone solutions}. Therefore, it could be obtained by the
hodograph method~\cite{Du10,DGKM, Tsarev}, yielding the following genus zero Euler--Lagrange equation:
\beq\label{ELgue45}
x \delta_{\beta,2} + \sum_{p\geq0} (T^{2,p}-\delta_{p,1}) \frac{\p \theta_{2,p}}{\p v^\beta}({\bf v}(x,{\bf s})) = 0, \quad \beta=1,2,
\eeq
where $T^{2,p}=(p+1)!s_{p+1}$, $p\geq0$.

Following~\cite{Du96}, define $\widehat\F_0(x,{\bf s})$ as the right-hand side of~\eqref{F0dub}. 
By using the well-known properties 
\begin{align}
& \Omega_{\alpha,p;\beta,q}^{[0]}({\bf v})=\Omega_{\beta,q;\alpha,p}^{[0]}({\bf v}), \quad
\p_{t^{\gamma,s}}(\Omega_{\alpha,p;\beta,q}^{[0]}({\bf v}))=\p_{t^{\beta,q}}(\Omega_{\alpha,p;\gamma,s}^{[0]}({\bf v})),
\quad \forall\,p,q,s\geq0, \label{omegapro1}\\
& \theta_{\alpha,p}({\bf v})=\Omega^{[0]}_{\alpha,p;1,0}({\bf v}), \quad \forall\, p\geq0, \label{omegapro2} 
\end{align}
one can verify the validity of the following equalities:
\begin{align}
\frac{\p^2 \widehat{\F}_0(x,{\bf s})}{\p T^{2,p} \p T^{2,q}} & = \Omega_{2,p;2,q}^{[0]}({\bf v}(x,{\bf s})), \quad \forall \, p,q\geq0, \label{defgenus0tau1}\\
\frac{\p^2 \widehat{\F}_0(x,{\bf s})}{\p x \p x} & = u(x,{\bf s}), \label{defgenus0tau2}\\
\frac{\p^2 \widehat{\F}_0(x,{\bf s})}{\p x \p T^{2,p}} & = \Omega_{1,0;2,p}^{[0]}({\bf v}(x,{\bf s})), \quad \forall \, p\geq0.  \label{defgenus0tau3}
\end{align}
From these equalities we see that 
 $\exp\bigl(\e^{-2}\widehat{\F}_0(x,{\bf s})\bigr)$ is the tau-function for the solution $(v(x,{\bf s}), u(x,{\bf s}))$
 to the $\p_x, \p_{T^{2,q}}$-flows of the principal hierarchy~\eqref{phdefp1} (cf.~\eqref{deftaustructureg0}).

It is not difficult to verify that 
 $\widehat{\F}_0(x,{\bf s})$ also satisfies the following linear equations: 
\begin{align}
&\sum_{j\geq1} \Bigl(s_j-\frac12 \delta_{j,2}\Bigr) \frac{\p \widehat \F_0(x,{\bf s})}{\p s_j} + x \frac{\widehat\F_0(x,{\bf s})}{\p x} 
= 2 \widehat \F_0(x,{\bf s}), \label{dilatongenuszero48} \\
&\sum_{j\geq2} j \Bigl(s_j-\frac12 \delta_{j,2}\Bigr) \frac{\p \widehat{\F}_0(x,{\bf s})}{\p s_{j-1}} +  xs_1 = 0. \label{genuszerostringgue}
\end{align}
We conclude from \eqref{defgenus0tau1}--\eqref{defgenus0tau3} and~\eqref{genuszerostringgue}
that $\widehat{\F}_0(x,{\bf s})$ could differ from~$\F_0(x,{\bf s})$ only possibly by adding a function
of~$x$ (actually with at most linear dependence in~$x$).
Taking ${\bf s}={\bf 0}$ in $\widehat\F_0(x,{\bf s})$ and in $\F_0(x,{\bf s})$, we find that they both give 
\beq
\frac{x^2}{2\e^2}\biggl(\log x-\frac32\biggr).
\eeq
Hence formula~\eqref{F0dub} is proved.

Similarly as we do for the LUE case in~\cite{YZ},  we proceed with the higher genera by using {\it quasi-triviality}. 
According to~\cite{DZ-norm,DZ1}, the following quasi-trivial map  
\begin{align}
& \widehat V = \frac{\Lambda-1}{\e\p_x} (v) + (\Lambda-1) \circ \p_{t^{2,0}} \Biggl(\sum_{g\geq1} \e^{2g-1} F_g^{\PP^1}\biggl({\bf v}, \frac{\p {\bf v}}{\p x},\dots,\frac{\p^{3g-2} {\bf v}}{\p x^{3g-2}}\biggr)\Biggr), \label{qm1}\\
& \widehat W = \frac{(\Lambda+\Lambda^{-1}-2)} {\e^2 \p_x^2} (u) + \bigl(\Lambda+\Lambda^{-1}-2\bigr) \Biggl(
\sum_{g\geq1} \e^{2g-2} F_g^{\PP^1}\biggl({\bf v}, \frac{\p {\bf v}}{\p x},\dots,\frac{\p^{3g-2}{\bf v}}{\p x^{3g-2}}\biggr)\Biggr), \label{qm2}
\end{align}
transforms the principal hierarchy~\eqref{phdefp1} to the extended Toda hierarchy~\cite{CDZ,DZ1}. The quasi-Miura map~\eqref{qm1}--\eqref{qm2} 
transforms a monotone solution of the principal hierarchy~\eqref{phdefp1} to a 
solution of the extended Toda hierarchy (see the Theorem~1.1 of~\cite{DZ1}).
As we just mentioned above, the particular solution $(v(x,{\bf s}), u(x, {\bf s}))$ 
of interest to the $\p_{t^{2,p}}$-flows ($p\geq0$) in the principal hierarchy~\eqref{phdefp1} specified 
by the initial data~\eqref{ini2022} is monotone. Therefore, the function  
$(\widehat V(x,{\bf s};\e), \widehat U(x,{\bf s};\e))$ defined by
\begin{align}
&\widehat V(x,{\bf s};\e) := \widehat V|_{v_k\mapsto\partial_x^k( v(x,{\bf s};\e)), u_k\mapsto\partial_x^k(u(x,{\bf s};\e)), k\geq0}, \\
&\widehat U(x,{\bf s};\e) := 
\widehat U|_{v_k\mapsto\partial_x^k( v(x,{\bf s};\e)), u_k\mapsto\partial_x^k(u(x,{\bf s};\e)), k\geq0},
\end{align}
is a particular solution to the Toda lattice hierarchy~\eqref{todahier}. What is more, 
since $e^{\e^{-2}\F_0(x,{\bf s})}=e^{\e^{-2}\widetilde{\F}_0(x,{\bf s})}$ is the tau-function of 
the solution $(v(x,{\bf s}), u(x, {\bf s}))$ 
to the dispersionless Toda lattice hierarchy and using again 
the Theorem~1.1 of~\cite{DZ1}, we find that
\beq
\tau(x,{\bf s};\e):=\exp\left({\e^{-2}\widetilde{\F}_0(x,{\bf s})} + 
\sum_{g\geq1} \e^{2g-2}F_g^{\mathbb{P}^1}\big|_{v_k\mapsto\partial_x^k( v(x,{\bf s})), u_k\mapsto\partial_x^k(u(x,{\bf s})), k\geq0} \right)
\eeq
is the tau-function of the solution $\bigl(\widehat V(x,{\bf s};\e), \widehat U(x,{\bf s};\e)\bigr)$ to the Toda lattice hierarchy. 
Note that 
the functions $F_g^{\mathbb{P}^1}$, $g\geq 2$,  
satisfy the following equation:
\beq
 \frac{\p F_g^{\mathbb{P}^1}}{\p v} = 0,\label{stfg2022}
\eeq
which follows from the string equation~\eqref{stringp1} for the GW invariants of~$\mathbb{P}^1$.
By using \eqref{dilatongenuszero48}, \eqref{homogenfgp1}, \eqref{genuszerostringgue}, \eqref{stfg2022}
one can verify that this tau-function $\tau(x,{\bf s};\e)$ satisfies the following two relations:
\begin{align}
&\sum_{j\geq1}  \Bigl(s_j-\frac12 \delta_{j,2}\Bigr)  \frac{\p \tau(x,{\bf s};\e)}{\p s_j} + \e  \frac{\p \tau(x,{\bf s};\e)}{\p \e} + x  \frac{\p \tau(x,{\bf s};\e)}{\p x} 
+ \frac1{12}\tau(x,{\bf s};\e) = 0, \label{GUEdilaton325}\\
&\sum_{j\geq2} j \Bigl(s_j-\frac12 \delta_{j,2}\Bigr) \frac{\p \tau(x,{\bf s};\e)}{\p s_{j-1}} + \frac{x s_1}{\e^2} \tau(x,{\bf s};\e)  = 0, \label{GUEstring325}
\end{align}
which agree with the linear equations~\eqref{dilaton}, \eqref{string}.
The theorem is proved. 
\end{proof}

Several applications of Theorem~\ref{thm1} can be found in \cite{DLYZ2,DY1,DY2,DYZ}; some of the details are also given in the next section. 

\section{Topological gravity and matrix gravity}\label{section4}

In the previous sections, we studied the GUE partition function and give in Theorem~\ref{thm1}  
a jet representation for the genus~$g$ GUE free energy $\F_g(x,{\bf s})$ for $g\geq1$, obtained  
by the one for the genus~$g$ free energy of GW invariants of~$\PP^1$. In this 
section, we consider the restriction to {\it even couplings}, and 
revisit its connection to GW invariants of {\it a point} and the associated Hodge integrals.

\subsection{Identification in topological gravity}
In his seminal work~\cite{Witten}, Witten proposed two versions of two-dimensional quantum gravity: topological gravity 
and matrix gravity. In this subsection, let us consider the topological one, that is, following 
Witten~\cite{Witten}, the partition function of 
psi-class integrals on Deligne--Mumford's moduli space of curves~\cite{DM}. To be precise, 
let $\mathcal{F}_{\rm WK}({\bf t};\e)$, $g\geq0$, be the following generating series for psi-class 
integrals:
\beq
\mathcal{F}_{\rm WK}({\bf t};\e):=\sum_{g\geq0} \e^{2g-2} \sum_{k\geq 0} 
\sum_{i_1, \dots, i_k\geq 0} \frac{t_{i_1}\cdots t_{i_k}}{k!} \int_{\overline{\mathcal M}_{g,k}} 
\psi_1^{i_1}\cdots \psi_k^{i_k},
\eeq
called the {\it free energy}.
Here, ${\bf t}=(t_0,t_1,t_2,\dots)$ and $\e$ are indeterminates, $\overline{\mathcal M}_{g,k}$ denotes the 
moduli space of stable algebraic curves of genus~$g$ with $k$ distinct marked points, 
and $\psi_i$ ($1\le i\le k$) denotes the first Chern class of the $i$th cotangent line bundle on $\overline{\mathcal M}_{g,k}$.
Let 
\beq
\F_{\rm WK}({\bf t};\e) := \sum_{g\geq0} \e^{2g-2} \mathcal{F}^{\rm WK}_g({\bf t}).
\eeq
We call $\mathcal{F}^{\rm WK}_g({\bf t})$ {\it genus~$g$ part of the free energy} $\F_{\rm WK}({\bf t};\e)$. The exponential 
\beq\label{defZwk44}
\exp\bigl(\F_{\rm WK}({\bf t};\e)\bigr) =: Z_{\rm WK}({\bf t};\e) 
\eeq
is called the {\it partition function of psi-class integrals}. 
It was conjectured by Witten~\cite{Witten} and proved by Kontsevich~\cite{Kont} 
that the partition function 
$Z_{\rm WK}({\bf t};\e)$ is a particular tau-function for the Korteweg--de Vries (KdV) integrable hierarchy.
We also refer to $Z_{\rm WK}({\bf t};\e)$ as the {\it partition function for the topological quantum gravity}. 

Another important model regarding the intersection theory on~$\overline{\mathcal M}_{g,k}$ 
is the partition function of certain special cubic Hodge integrals \cite{DLYZ1, FP, LLZ, OP3}, 
which from its definition is a deformation of the partition function $Z_{\rm WK}({\bf t};\e)$ and 
has important relation to the GUE partition function \cite{DLYZ2, DY2}. 
To be precise,  define $Z_{\rm H} ({\bf t};\e)$ as follows:
\begin{align}
& Z_{\rm H} ({\bf t};\e) = e^{\mathcal{H}({\bf t};\e)}, \label{sh144}
\end{align}
where
\begin{align}
& \mathcal{H}({\bf t};\e):= \sum_{g\ge0} \e^{2g-2} \mathcal{H}_g({\bf t}), \label{sh244}\\
& \mathcal{H}_g({\bf t}):=\sum_{k\geq 0}
\sum_{i_1, \dots, i_k\geq 0} \frac{t_{i_1}\cdots t_{i_k}}{k!} \int_{\overline{\mathcal M}_{g,k}} 
\psi_1^{i_1}\cdots \psi_k^{i_k}\Lambda(-1)^2 \Lambda(\tfrac12),\quad g\geq0.  \label{sh344}
\end{align}
Here, $\Lambda(z):=\sum_{j=0}^g \lambda_j z^j$ is the Chern polynomial of the Hodge bundle $\mathbb{E}_{g,k}$ on $\overline{\mathcal M}_{g,k}$ with 
$\lambda_j$ being the $j$th Chern class of $\mathbb{E}_{g,k}$. We call $\mathcal{H}({\bf t};\e)$ the {\it Hodge 
free energy} and $Z_{\rm H} ({\bf t};\e)$ the {\it Hodge partition function}\footnote{The Hodge partition function considered 
in this paper is a specialization of the one in~\cite{DLYZ1}; geometric and topological significance of this specialization can be found 
e.g.~in~\cite{DLYZ1, DLYZ2, DY2}.}.
Being suggested by the {\it Hodge-GUE correspondence}~\cite{DLYZ2, DY2} (see also Theorem~\ref{HodgeGuethm45} below), we 
refer to the Hodge partition function 
$Z_{\rm H} ({\bf t};\e)$ defined in~\eqref{sh144}--\eqref{sh344}
as the {\it dual partition function for the topological quantum gravity}. 

In genus zero, we have the obvious equality 
\beq\label{genus0id44}
\mathcal{H}_0({\bf t})=\F^{\rm WK}_0({\bf t}).
\eeq
The discrepancy between the two partition functions 
$Z_{\rm H} ({\bf t};\e)$ and $Z_{\rm WK}({\bf t};\e)$
starts from their genus one parts. To understand this discrepancy, 
it will be convenient to look at their jet-representations \cite{DW, DY3, DZ-norm, EYY, Witten}. Recall the following lemma. 
\begin{lemma}  \label{lemma4428}
Denote 
\beq
v_{\rm WK}({\bf t}) := \frac{\p^2 \F^{\rm WK}_0({\bf t})}{\p t_0^2}.
\eeq
For each $g\geq 1$, there exist elements 
\begin{align}
& F^{\rm WK}_g(z_1,\dots,z_{3g-2})  \;\in\; \QQ\bigl[z_2,\dots,z_{3g-2}, z_1, z_1^{-1}\bigr], \\
& H_g(b_0,b_1,\dots,b_{3g-2}) \;\in\; \QQ\bigl[b_2,\dots,b_{3g-2}, b_0, b_1, b_1^{-1}\bigr],
\end{align}
such that 
\begin{align}
&\F^{\rm WK}_g({\bf t}) = F^{\rm WK}_g\biggl(\frac{\p v_{\rm WK}({\bf t})}{\p t_0},\dots, \frac{\p^{3g-2} v_{\rm WK}({\bf t})}{\p t_0^{3g-2}}\biggr), \\ 
&\mathcal{H}_g({\bf t}) = H_g\biggl(v_{\rm WK}({\bf t}),\frac{\p v_{\rm WK}({\bf t})}{\p t_0},\dots, \frac{\p^{3g-2} v_{\rm WK}({\bf t})}{\p t_0^{3g-2}}\biggr).
\end{align}
Moreover, for $g\geq2$, $H_g(b_0,b_1,\dots,b_{3g-2})$ does not depend on~$b_0$.  
\end{lemma}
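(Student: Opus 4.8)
The plan is to obtain both jet representations from the Dubrovin--Zhang theory of one-dimensional Frobenius manifolds, reading $v_{\rm WK}$ as a common genus-zero normal coordinate, and then to extract the two special features --- the absence of $z_0$ in $F^{\rm WK}_g$ and the absence of $b_0$ in $H_g$ for $g\ge2$ --- from the translation symmetry of pure gravity and from a string equation, respectively. The starting observation is that, by \eqref{genus0id44}, $Z_{\rm WK}$ and $Z_{\rm H}$ share their genus-zero data, so $v_{\rm WK}=\p_{t_0}^2\F^{\rm WK}_0=\p_{t_0}^2\mathcal{H}_0$ is the genus-zero normal coordinate for both models and all jet variables $z_k$ (resp. $b_k$) are to be read as $\p_{t_0}^k v_{\rm WK}$.

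First I would treat the Witten--Kontsevich side. Since $Z_{\rm WK}$ is the topological tau-function of the KdV hierarchy (the DZ hierarchy of the trivial one-dimensional Frobenius manifold, whose unit field is $\p/\p v_{\rm WK}$), the DZ quasi-triviality --- the one-dimensional analogue of the quasi-Miura map \eqref{qm1}--\eqref{qm2} --- shows that each $\F^{\rm WK}_g$, evaluated on the topological solution, is a differential polynomial in the jets $z_k$ with only powers of $z_1$ in the denominators. A weighted-degree count (assign degree $k$ to $z_k$ and use the homogeneity/dilaton equation, the analogue of \eqref{homogenfgp1}) bounds the jet order by $3g-2$ and, for $g\ge2$, forces $F^{\rm WK}_g\in\QQ[z_2,\dots,z_{3g-2},z_1,z_1^{-1}]$, the genus-one case being the explicit $\tfrac1{24}\log z_1$. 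The independence of $z_0$ for every $g\ge1$ I would deduce from the fact that $v_{\rm WK}$ is the flat coordinate along the unit vector field: the corresponding one-dimensional loop equation involves $v_{\rm WK}$ and the spectral parameter $\lambda$ only through $v_{\rm WK}-\lambda$ together with the higher jets $z_1,z_2,\dots$, hence is invariant under $(v_{\rm WK},\lambda)\mapsto(v_{\rm WK}+c,\lambda+c)$; by uniqueness of its solution up to additive constants each $F^{\rm WK}_g$ must then be invariant under $z_0\mapsto z_0+c$.

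On the Hodge side, $Z_{\rm H}$ is a deformation of $Z_{\rm WK}$ over the same genus-zero data, governed by a quasi-trivial hierarchy \cite{DLYZ1, DLYZ2, DY2}; the same quasi-triviality and degree count yield the jet representation $\mathcal{H}_g=H_g(b_0,\dots,b_{3g-2})$ with $H_g\in\QQ[b_2,\dots,b_{3g-2},b_0,b_1,b_1^{-1}]$. The essential difference from the Witten--Kontsevich case is that the $\lambda$-class insertion breaks the unit-field translation symmetry of the loop equation, so a genuine $b_0$-dependence already appears at genus one, where $\Lambda(-1)^2\Lambda(\tfrac12)=1-\tfrac32\lambda_1$ on $\overline{\mathcal M}_{1,k}$. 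To prove the ``moreover'' I would invoke the string equation for $Z_{\rm H}$ exactly as \eqref{stfg2022} is obtained from \eqref{stringp1} in the $\PP^1$-case: reduced to the topological solution and written in the jet variables $b_k$, it expresses $\p H_g/\p b_0$ through data that vanishes once $g\ge2$, the genus-one case being obstructed by the inhomogeneous (dilaton-shift) seed responsible for the $\lambda_1$-contribution.

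The hard part will be precisely this last point on the Hodge side. Unlike the Witten--Kontsevich free energies, neither the sharp ring $\QQ[b_2,\dots,b_{3g-2},b_0,b_1,b_1^{-1}]$ nor the vanishing of $\p H_g/\p b_0$ for $g\ge2$ is classical: the shared string equation alone does not separate Witten--Kontsevich from Hodge (it holds at all genera for the former), so one must establish the quasi-triviality of the Hodge hierarchy and carry the string-equation reduction carefully enough to see that genus one is the \emph{unique} exception, or else inherit the structure from the $v^1$-independence of $F^{\PP^1}_g$. Keeping track of this single genus-one anomaly, rather than the mere existence of the jet representation, is what I expect to be the main obstacle.
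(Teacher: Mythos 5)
You should first note that the paper does not actually prove Lemma~\ref{lemma4428}: it defers entirely to \cite{DLYZ1, DY3}, so the relevant comparison is with those references, and your sketch reconstructs their methodology faithfully (Dubrovin--Zhang quasi-triviality and loop equation for the existence and shape of the jet representations, string-type constraints for the structural claims). Your proposal is essentially correct, but you substantially overestimate the difficulty of the ``moreover'' clause, which you single out as the main obstacle; it follows from the shared string equation by a short computation. Set $S:=\p_{t_0}-\sum_{i\ge0}t_{i+1}\p_{t_i}$; since $[S,\p_{t_0}^k]=0$ and the genus-zero string equation gives $S\bigl(\F^{\rm WK}_0\bigr)=t_0^2/2$, one finds
\beq
S\bigl(v_{\rm WK}\bigr)=1,\qquad S\bigl(\p_{t_0}^k v_{\rm WK}\bigr)=0 \quad (k\ge1),
\eeq
so on any jet representation evaluated along the topological solution the operator $S$ acts precisely as $\p/\p b_0$. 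The string equation for the present Hodge potential, $S(\mathcal{H})=\frac{t_0^2}{2\e^2}-\frac1{16}$, has exactly one positive-genus inhomogeneous term: the genus-one constant $-\frac1{16}=-\frac32\int_{\overline{\mathcal M}_{1,1}}\lambda_1$, coming from $\Lambda(-1)^2\Lambda(\tfrac12)=1-\tfrac32\lambda_1+\cdots$, while for $g\ge2$ all moduli spaces involved are stable and no correction arises. Hence $\p H_1/\p b_0=-\frac1{16}$, matching the displayed $H_1$, and $\p H_g/\p b_0=0$ for every $g\ge2$, where one uses the functional independence of the jets $\p_{t_0}^k v_{\rm WK}$ along the topological solution to upgrade an identity of functions of ${\bf t}$ to an identity of jet polynomials. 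The same computation applied to $S\bigl(\F^{\rm WK}\bigr)=\frac{t_0^2}{2\e^2}$ yields the $z_0$-independence of $F_g^{\rm WK}$ for all $g\ge1$, more directly than your translation-invariance argument for the loop equation (which is also valid). In particular, your remark that ``the shared string equation alone does not separate Witten--Kontsevich from Hodge'' is the one misstep: the two string equations differ exactly by the genus-one constant, and that difference is the entire content of the ``moreover'' statement.
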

\noindent See for example~\cite{DLYZ1, DY3} for the proof of this lemma. For the reader's convenience, we list the first few 
$F^{\rm WK}_g(z_1,\dots,z_{3g-2})$, $H_g(b_0,b_1,\dots,b_{3g-2})$ as follows:
\begin{align}
& F^{\rm WK}_1(z_1)=\frac1{24} \log z_1, \quad F^{\rm WK}_2(z_1,z_2,z_3,z_4) = \frac{z_4}{1152 z_1^2}-\frac{7 z_3 z_2}{1920 z_1^3}+\frac{z_2^3}{360 z_1^4}, \\
& H_1(b_0,b_1) =   \frac1{24} \log b_1  - \frac {b_0}{16},\\
& H_2(b_0,b_1,b_2,b_3,b_4) =  \frac{7 b_2}{2560}+\frac{11 b_2^2}{3840 b_1^2}-\frac{b_1^2}{11520}-\frac{b_3}{320 b_1} \\
& \qquad \qquad \qquad \qquad \quad +\frac{b_4}{1152 b_1^2}-\frac{7 b_3 b_2}{1920 b_1^3}+\frac{b_2^3}{360 b_1^4}. \nn
\end{align}
The elements $F^{\rm WK}_g(z_1,\dots,z_{3g-2})$ with $g\geq1$ can be 
calculated recursively by solving the DZ loop equation~\cite{DZ-norm}; 
 the elements $H_g(b_0,b_1,\dots,b_{3g-2})$ can also be calculated 
recursively by solving the DZ type loop equation~\cite{DLYZ2}, or, they can be calculated
by using the algorithm given in~\cite{DLYZ1}.

Introduce a gradation $\widetilde{\deg}$ in $\QQ[b_2,\dots,b_{3g-2}, b_0, b_1, b_1^{-1}]$ by assigning 
\beq
\widetilde{\deg} \, b_k=1, \quad \forall\, k\geq 0. 
\eeq
Then for $g\geq2$, $H_g(b_0,b_1,\dots,b_{3g-2})$ decomposes into the homogeneous parts with respect to $\widetilde{\deg}$ as 
follows:
\beq
H_g(b_0,b_1,\dots,b_{3g-2}) = \sum_{d=1-g}^{2g-2} H_g^{[d]}(b_0,b_1,\dots,b_{3g-2}),
\eeq
where $H_g^{[d]}(b_0,b_1,\dots,b_{3g-2})$ is homogeneous of degree~$d$ with respect to~$\widetilde{\deg}$.
We have (cf.~\cite{DLYZ1, DY3}) 
\beq\label{lowestdegreewk}
H_1(b_0,b_1)=F^{\rm WK}_1(b_1)-\frac1{16}b_0,\quad H_g^{[1-g]}(b_0,b_1,\dots,b_{3g-2}) = F^{\rm WK}_g(b_1,\dots,b_{3g-2}) ~(g\geq2).
\eeq
Namely, $H_g$ can be viewed as a specific deformation of $F^{\rm WK}_g$;
in the big phase space this is obvious (by definition), and we see the deformation 
in the jet space by equalities in~\eqref{lowestdegreewk}.

The following proposition says that under a coordinate transformation in the jet space, {\it remarkably},  
$H_g(b_0,b_1,\dots,b_{3g-2})$ becomes $F^{\rm WK}_g(z_1,\dots,z_{3g-2})$, $g\geq1$.
\begin{prop}\label{idgravity}
Under the transformation $B: (z_0,z_1,\dots) \rightarrow (b_0, b_1,\dots)$ (i.e., $b_i=B_i({\bf z})$, $i\geq0$),   
defined inductively from 
\beq\label{112414}
B_0({\bf z}) = - \log z_0, \quad \p''(z_0) = - \frac12 \frac{z_1}{\sqrt{z_0}}, \quad [\p', \p'']=0,
\eeq
we have the identities:
\beq\label{hgfwkg425}
4^{g-1} H_g\bigl(B_0({\bf z}),B_1({\bf z}),\dots,B_{3g-2}({\bf z})\bigr)=F^{\rm WK}_g(z_1,\dots,z_{3g-2}), \quad g\geq1.
\eeq
Here, $\p'$ is the derivation on $\QQ[z_0, z_1, z_1^{-1}, z_2,z_3,\dots]$ such that $\p'(z_i)=z_{i+1}$,
and $\p''$ is the derivation on $\QQ[b_0, b_1, b_1^{-1}, b_2,b_3,\dots]$ such that $\p''(b_i)=b_{i+1}$.
\end{prop}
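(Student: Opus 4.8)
The plan is to prove~\eqref{hgfwkg425} by the same loop-equation-plus-uniqueness mechanism that underlies the jet representations of Lemma~\ref{lemma4428}. Both sides are genus-$g$ components of the unique solutions of Dubrovin--Zhang-type loop equations: $\sum_{g\geq1}\epsilon^{2g}F^{\rm WK}_g$ solves the point/KdV specialization of the loop equation (the one-dimensional analogue of~\eqref{loop}, with spectral building block $(z_0-\lambda)^{-1/2}$ and derivation $\partial'$), while $\sum_{g\geq1}\epsilon^{2g}H_g$ solves the deformed loop equation of~\cite{DLYZ2} attached to the Hodge class $\Lambda(-1)^2\Lambda(\tfrac12)$. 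The strategy is to show that the change of jet variables $B$ of~\eqref{112414}, accompanied by a rescaling of the dispersion parameter, carries the Hodge loop equation into the WK loop equation; uniqueness of the solutions (up to additive constants) then forces $4^{g-1}H_g\bigl(B_0(\mathbf z),\dots,B_{3g-2}(\mathbf z)\bigr)$ to coincide with $F^{\rm WK}_g$.

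First I would make $B$ completely explicit. Since $b_i=(\partial'')^i(b_0)$ with $b_0=-\log z_0$, the relation $\partial''(z_0)=-\tfrac12 z_1 z_0^{-1/2}$ together with $[\partial',\partial'']=0$ (i.e.\ $\partial''(z_{i+1})=\partial'(\partial''(z_i))$) determines $\partial''$ on every $z_i$, hence every $B_i(\mathbf z)=(\partial'')^i(-\log z_0)$; for instance $B_1(\mathbf z)=\tfrac12 z_1 z_0^{-3/2}$. The role of the commutation relation is precisely that $\partial''$ is a consistent ``spatial'' derivation in the $b$-frame, so that $\partial''$-jets go over to $\partial'$-jets under $B$; this is what lets the natural derivation in the Hodge loop equation be rewritten in terms of $\partial'$. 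The genus-zero identity $\mathcal H_0=\mathcal F^{\rm WK}_0$ of~\eqref{genus0id44}, read through $b_0=-\log z_0$ and $\partial''(z_0)=-\tfrac12 z_1 z_0^{-1/2}$, supplies the dispersionless dictionary between the building blocks of the two loop equations.

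Next I would substitute $b_i=B_i(\mathbf z)$ into the Hodge loop equation, use $[\partial',\partial'']=0$ to convert all $\partial''$-derivatives into $\partial'$-derivatives, and use the genus-zero dictionary to send the Hodge spectral resolvent to the WK resolvent $(z_0-\lambda)^{-1/2}$ and its $\partial'$-derivatives. Bookkeeping the powers of $\epsilon$ produces the scalar: the reparametrization encoded by $B$ rescales the effective dispersion parameter by a factor $2$, so $\sum_g\epsilon^{2g}H_g$ becomes $\sum_g 4^{g}\epsilon^{2g}H_g$, and a single overall factor $4^{-1}$ coming from the shared genus-zero normalization yields the stated $4^{g-1}$. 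Matching the homogeneity normalization~\eqref{homogenfgp1} (respectively the degree decomposition through~\eqref{lowestdegreewk}) fixes the additive constants: for $g\geq2$ this gives genuine equality, while for $g=1$ one checks directly from the explicit $H_1$ and $F^{\rm WK}_1$ that $H_1(B_0,B_1)=\tfrac1{24}\log z_1-\tfrac1{24}\log2$, so~\eqref{hgfwkg425} holds up to the additive constant $-\tfrac1{24}\log2$ allowed by the loop equation.

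The main obstacle is this third step: verifying that the explicit coefficients of the KdV loop equation and of the deformed loop equation of~\cite{DLYZ2} correspond term by term under $B$. This demands that the two spectral building blocks, and the quadratic $S(\cdot,\cdot,\cdot)$-type terms, transform into one another exactly, and that the $z_0$-dependence introduced through the $B_i$ (absent from $F^{\rm WK}_g$, and from $H_g$ for $g\geq2$ by Lemma~\ref{lemma4428}) cancels completely; this cancellation is exactly what the relation $[\partial',\partial'']=0$ and the homogeneity structure are arranged to guarantee. Once the term-by-term matching is established, uniqueness of the loop-equation solutions closes the argument.
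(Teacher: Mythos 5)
Your proposal is a strategy outline rather than a proof: the step you yourself label ``the main obstacle'' --- verifying that the substitution $b_i=B_i({\bf z})$, together with a rescaling of~$\e$, carries the DZ-type loop equation of~\cite{DLYZ2} for $\sum_{g\ge1}\e^{2g}H_g$ term by term into the KdV loop equation for $\sum_{g\ge1}\e^{2g}F^{\rm WK}_g$ --- is exactly where all the content of Proposition~\ref{idgravity} sits, and you never perform it; asserting that $[\p',\p'']=0$ and homogeneity ``are arranged to guarantee'' the required cancellations is a restatement of what must be proved. Two concrete points show the gap is real. First, the scalar bookkeeping is too naive: the loop equation is \emph{quadratic} in $\Delta F$ (through the $S(f,a,b)=\p^2 f+\p f\,\p f$ terms), so writing $\Delta F^{\rm WK}=\tfrac14\,\Delta\mathcal{H}\big|_{{\bf b}=B({\bf z}),\,\e\mapsto2\e}$ and substituting into the equation, the dispersionless term, the linear-in-$\Delta F$ terms carrying $\e^2$, and the quadratic terms acquire three \emph{different} factors (respectively $\tfrac14$, $4$ and $16$ after normalizing the left-hand side); a ``single overall factor $4^{-1}$'' does not pass through a nonlinear equation, so the claimed origin of $4^{g-1}$ is unestablished. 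Second, before any uniqueness theorem for the KdV loop equation can be invoked, one must show that $H_g(B_0({\bf z}),\dots,B_{3g-2}({\bf z}))$ is independent of $z_0$: every $B_i$ with $i\ge1$ depends on $z_0$, while the DZ uniqueness statement applies only within the class of functions of $z_1,\dots,z_{3g-2}$ (polynomial in $z_2,\dots$, rational in $z_1$). You defer this $z_0$-cancellation to the same unproven mechanism, so uniqueness cannot yet be applied at all.

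For comparison, the paper does not prove the proposition by matching loop equations: it invokes the Hodge--BGW correspondence, citing~\cite{YZQS} for the proof and~\cite{YZa} for an equivalent version, so your route is genuinely different in mechanism (and not unreasonable in spirit), but as written it remains a plan. Finally, your $g=1$ check is correct arithmetic --- with the paper's $H_1(b_0,b_1)=\tfrac1{24}\log b_1-\tfrac{b_0}{16}$, $B_0=-\log z_0$ and $B_1=\tfrac{z_1}{2}z_0^{-3/2}$ one indeed finds $H_1(B_0,B_1)=\tfrac1{24}\log z_1-\tfrac1{24}\log 2$ --- but concluding that \eqref{hgfwkg425} ``holds up to an additive constant'' is strictly weaker than the exact identity claimed. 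A proof of the statement as written must either resolve that $\log 2$ (via the normalization conventions actually used in~\cite{YZQS}) or identify it as an erratum in the stated formulas; your proposal does neither, so even the base case $g=1$ is not established in the stated form.
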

The proof of Proposition~\ref{idgravity} using the Hodge-BGW correspondence is given in~\cite{YZQS}. 
An equivalent version of this proposition and the proof are given in~\cite{YZa}.
For the reader's convenience, let us list the first few terms of the change of jet-variables in Proposition~\ref{idgravity}:
\begin{align}
B_1({\bf z}) = \frac{z_1}{2z_0^{3/2}}, \quad B_2({\bf z})=\frac{z_1^2}{2z_0^3}-\frac{z_2}{4z_0^2}, \quad
B_3({\bf z}) = \frac18 \frac{z_3}{z_0^{5/2}} -\frac{15}{16} \frac{z_1 z_2}{z_0^{7/2}} + \frac{35}{32}\frac{z_1^3}{z_0^{9/2}}
\end{align}
with $B_0({\bf z})$ given already in~\eqref{112414}. This transformation is invertible, and let us list also the first 
few terms of the inverse transformation:
\begin{align}
& (B^{-1})_0({\bf b}) = e^{-b_0}, \quad (B^{-1})_1({\bf b}) = 2 e^{-\frac{3}{2} b_0} b_1, \quad (B^{-1})_2({\bf b}) =  e^{-2b_0} (-4 b_2 + 8 b_1^2), \\
& (B^{-1})_3({\bf b}) = e^{-\frac{5}{2} b_0} \bigl(8 b_3-60 b_2 b_1+50 b_1^3\bigr).
\end{align}
A closed formula for the map~$B^{-1}$ is found with Don Zagier~\cite{YZa}. 
We have the identity 
\beq\label{123428}
F^{\rm WK}_g\bigl((B^{-1})_1({\bf b}),\dots,(B^{-1})_{3g-2}({\bf b})\bigr) = 4^{g-1} H_g\bigl(b_0,b_1,\dots,b_{3g-2}\bigr), \quad g\geq1.
\eeq
In view of integrable systems, 
the relationship given in
Proposition~\ref{idgravity} reveals the space/time duality 
between the $q$-deformed KdV hierarchy (cf.~\cite{BCRR, Frenkel, LYZZ})
and the KdV hierarchy.

\subsection{Back to the matrix gravity}
In the previous subsection, we recalled the  
identification between the partition function~\eqref{defZwk44} and the dual partition function~\eqref{sh144} 
for the topological quantum gravity: 
for genus zero, it is given in the big phase space by~\eqref{genus0id44}; 
for higher genera, it is given in the jet-space by Proposition~\ref{idgravity}. 

In this subsection, following Witten~\cite{Witten}, we look at a certain reduction of the GUE partition function, 
which is regarded to as the matrix gravity. To be precise, define 
 the {\it even GUE partition function} $Z_{\rm even}(x,{\bf s}_{\rm even})$ by 
\beq\label{globaldefZeven}
Z_{\rm even}(x,{\bf s}_{\rm even}) := 
\frac{(2\pi)^{-n}\e^{-\frac1{12}}}{{\rm Vol}(n)} \int_{{\mathcal H}(n)} e^{-{\frac1\e}  \tr \, Q_{\rm even}(M; {\bf s}_{\rm even})} dM, \quad x=n\e,
\eeq
where ${\rm Vol}(n)$ is defined in~\eqref{voldef41}, and 
\beq
Q_{\rm even}(y; {\bf s}_{\rm even}) := \frac12 y^2 -\sum_{j \in \ZZ^{\rm even}_{\geq2}} s_{j} y^{j}.
\eeq 
Clearly, this partition function equals $Z(x,{\bf s})$ being restricted to ${\bf s}_{\rm odd}={\bf 0}$.

According to~\eqref{Fgue1x} and~\eqref{defcorrGUEfree}, the logarithm of $Z_{\rm even}(x,{\bf s}_{\rm even})$ has the expression
\begin{align}
& \log Z_{\rm even}(x,{\bf s}_{\rm even};\e) =:\F_{\rm even}(x,{\bf s}_{\rm even};\e) 
=: \sum_{g\geq0} \e^{2g-2} \F^{\rm even}_{g}(x,{\bf s}_{\rm even}) \\
& = 
\frac{x^2}{2\e^2}\biggl(\log x-\frac32\biggr) 
- \frac{\log x}{12} + \zeta'(-1) + \sum_{g\geq2} \frac{\e^{2g-2} B_{2g}}{4g(g-1)x^{2g-2}} \\
& \quad + \sum_{k\geq 1} \sum_{g\geq 0, \, j_1, \dots, j_k \in Z^{\rm even}_{\ge2} \atop 2-2g - k + |{\bf j}|/2\ge1} a_g({\bf j}) 
 s_{j_1} \cdots s_{j_k} \e^{2g-2} x^{2-2g - k + |{\bf j}|/2}.   \nn
\end{align}
We call $\F_{\rm even}(x,{\bf s}_{\rm even};\e)$ the {\it even GUE free energy}, and 
$\F^{\rm even}_{g}(x,{\bf s}_{\rm even})$ its genus~$g$ part.

The power series $u(x,{\bf s})$, $v(x,{\bf s})$ (cf.~\eqref{ELgue45}) being restricted to ${\bf s}_{\rm odd}={\bf 0}$, 
denoted by $u(x,{\bf s}_{\rm even})$, $v(x,{\bf s}_{\rm even})$, have the following explicit expressions~\cite{DY2}:
\begin{align}
& v(x,{\bf s}_{\rm even}) = 0, \label{veven43}\\
& e^{u(x,{\bf s}_{\rm even})} =  
\sum_{k=1} \frac1k \sum_{j_1,\dots,j_k\in \mathbb{Z}^{\rm even}_{\geq 0}, \atop j_1+\cdots+j_k=2k-2} {\rm wt} (j_1) \cdots {\rm wt}(j_k) 
\binom{j_1}{j_1/2} \cdots \binom{j_k}{j_k/2} s_{j_1} \cdots s_{j_k},\label{ueven43}
\end{align}
where we put $s_0=x$, and for $j\in\mathbb{Z}^{\rm even}_{\geq 0}$, 
\beq
{\rm wt} (j) := \left\{ \begin{array}{cc} 1, & j=0, \\ j/2, & {\rm otherwise}. \\ \end{array} \right.
\eeq

It is shown in~\cite{DY2} that one can take $v_0=v_1=v_2=\dots=0$ in 
$F^{\mathbb{P}^1}_g({\bf v}, {\bf v}_1,\dots,{\bf v}_{3g-2})$, $g\geq1$, yielding functions of $u_1, u_2,\dots$, denoted by 
$F^{\rm even}_g(u_1,\dots,u_{3g-2})$; explicitly, 
\beq\label{fevengdef}
F^{\rm even}_g(u_1,\dots,u_{3g-2}) := 
F^{\mathbb{P}^1}_g({\bf v}, {\bf v}_1,\dots,{\bf v}_{3g-2})\big|_{v_0=v_1=\dots=0} + \biggl(\zeta'(-1)-\frac{\log(-1)}{24}\biggr)\delta_{g,1}.
\eeq
For example, $F^{\rm even}_1=\frac1{12} \log u_1+\zeta'(-1)$. The expression for $F_g^{\rm even}$ with $g=2,\dots,5$ can be found in~\cite{DY2}.
The following theorem is then obtained.
\begin{theorem} [\cite{DY2}] \label{thm2}
The genus zero part of the even GUE free energy $\mathcal{F}_{{\rm even},0}(x,{\bf s}_{\rm even})$ has the expression:
\begin{align}
& \mathcal{F}^{\rm even}_0(x,{\bf s}_{\rm even}) = \frac12 x^2 u(x,{\bf s}_{\rm even}) 
+ x\sum_{j \in \mathbb{Z}_{\geq2}^{\rm even}} \binom{j}{j/2} \Bigl(s_{j}-\frac12\delta_{j,2}\Bigr) e^{\frac{j_1+j_2}2u(x,{\bf s}_{\rm even})} \label{F0even} \\ 
& \quad + 
\frac14\sum_{j_1,j_2 \in \mathbb{Z}_{\geq2}^{\rm even}} \frac{j_1j_2}{j_1+j_2} \binom{j_1}{j_1/2} \binom{j_2}{j_2/2} 
\Bigl(s_{j_1}-\frac12\delta_{j_1,2}\Bigr) \Bigl(s_{j_2}-\frac12\delta_{j_2,2}\Bigr)  e^{\frac{j_1+j_2}2u(x,{\bf s}_{\rm even})}, \nn
\end{align}
where $u(x,{\bf s}_{\rm even})=\log x + \cdots$ is given by~\eqref{ueven43}.
For $g\geq1$, the genus~$g$ part of the even GUE free energy 
$\mathcal{F}^{\rm even}_g(x,{\bf s}_{\rm even})$ satisfies that
\beq\label{fgfmgequal}
\mathcal{F}^{\rm even}_{g}(x,{\bf s}_{\rm even}) 
= F^{\rm even}_g\biggl(u(x,{\bf s}_{\rm even}), \frac{\p u(x,{\bf s}_{\rm even})}{\p x},\dots,\frac{\p^{3g-2} u(x,{\bf s}_{\rm even})}{\p x^{3g-2}}\biggr),
\eeq
where $F^{\rm even}_g(u,u_1,\dots,u_{3g-2})$ are defined by~\eqref{fevengdef}.
\end{theorem}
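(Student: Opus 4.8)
The plan is to deduce Theorem~\ref{thm2} from Theorem~\ref{thm1} by restricting the full GUE free energy to even couplings. By the very definition of the even GUE partition function, $Z_{\rm even}(x,{\bf s}_{\rm even})=Z(x,{\bf s})|_{{\bf s}_{\rm odd}={\bf 0}}$, so on the level of genus-$g$ parts $\F^{\rm even}_g(x,{\bf s}_{\rm even})=\F_g(x,{\bf s})|_{{\bf s}_{\rm odd}={\bf 0}}$. The whole argument then hinges on one structural fact: that the first component of the principal-hierarchy solution vanishes after the restriction, i.e.\ $v(x,{\bf s}_{\rm even})\equiv0$ as in~\eqref{veven43}. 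Granting this, every $x$-derivative $\p_x^k v$ also vanishes, so the jets ${\bf v}_k=(v_k,u_k)$ entering the jet representation of Theorem~\ref{thm1} all have $v_k=0$, and $\F_g$ is obtained by evaluating $F_g^{\PP^1}$ on the slice $v_0=v_1=\dots=0$ --- which is precisely the definition~\eqref{fevengdef} of $F_g^{\rm even}$.

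To establish $v(x,{\bf s}_{\rm even})\equiv0$ I would use the genus zero Euler--Lagrange equation~\eqref{ELgue45} together with a parity argument. Expanding the generating series~\eqref{theta2z} shows that $\theta_{2,p}({\bf v})$ is a polynomial in $v$ all of whose monomials carry the parity of $p+1$; hence $\theta_{2,p}$ is \emph{even} in~$v$ precisely when $p$ is odd. Under the restriction ${\bf s}_{\rm odd}={\bf 0}$ the surviving flow-times $T^{2,p}=(p+1)!\,s_{p+1}$, together with the shift $-\delta_{p,1}$, are exactly those with $p$ odd, so in the $\beta=1$ component of~\eqref{ELgue45} every $\p\theta_{2,p}/\p v$ is odd in~$v$ and therefore vanishes identically at $v=0$. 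Thus $v=0$ solves the $\beta=1$ equation for any~$u$, while the $\beta=2$ equation at $v=0$ fixes $u(x,{\bf s}_{\rm even})$; by uniqueness of the monotone power-series solution this is \emph{the} solution, giving $v\equiv0$. Inverting the $\beta=2$ equation at $v=0$ also produces the closed form~\eqref{ueven43} for $e^{u}$, with $u=\log x+\cdots$, which is a useful byproduct but not logically needed for the theorem.

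For the higher genera the argument is then essentially bookkeeping. Setting $v_0=v_1=\dots=0$ in the jet representation of Theorem~\ref{thm1} (cf.~\eqref{fgfmgequalp1}) turns $F_g^{\PP^1}$ into $F_g^{\rm even}$, and the additive constant $(\zeta'(-1)-\tfrac1{24}\log(-1))\delta_{g,1}$ carried by Theorem~\ref{thm1} is exactly the one built into~\eqref{fevengdef}; for $g=1$ one checks that the $\log(-1)$ arising from $F_1^{\PP^1}=\tfrac1{24}\log(v_1^2-e^{u}u_1^2)-\tfrac{u}{24}$ (see~\eqref{f1gw}) at $v_1=0$ cancels against it, leaving $F_1^{\rm even}=\tfrac1{12}\log u_1+\zeta'(-1)$. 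The one point genuinely needing care here is that the restriction $v_0=v_1=\dots=0$ is well defined, i.e.\ that the rational (in ${\bf v}_1$) dependence of $F_g^{\PP^1}$ develops no pole along this slice; this holds because the relevant denominators are powers of the discriminant $v_1^2-e^{u}u_1^2$, which specializes to the nonzero $-e^{u}u_1^2$.

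The remaining work, and the main computational obstacle, is the genus zero formula~\eqref{F0even}. Here I would start from Dubrovin's expression~\eqref{F0dub} and set $v=0$ throughout. The terms linear in~$x$ are immediate from the evaluation $\theta_{2,p}({\bf v})|_{v=0}=e^{(p+1)u/2}/\,((p+1)/2)!^{2}$ for $p$ odd (and $0$ for $p$ even), which after multiplication by $(p+1)!$ produces the binomial coefficient $\binom{j}{j/2}$ with $j=p+1$. The quadratic terms require the closed form of $\Omega^{[0]}_{2,p;2,q}({\bf v})|_{v=0}$, which I would extract from the dispersionless generating series~\eqref{taun1disp} with $v=0$ and $w=e^{u}$ (so that $B(\lambda)=1/\sqrt{\lambda^2-4e^{u}}$) via $\omega^{[0]}_{i,j}=i!\,j!\,\Omega^{[0]}_{2,i-1;2,j-1}$; resumming the resulting coefficients against the even couplings should yield the $\tfrac{j_1 j_2}{j_1+j_2}\binom{j_1}{j_1/2}\binom{j_2}{j_2/2}e^{(j_1+j_2)u/2}$ kernel of~\eqref{F0even}. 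Finally the purely-$x$ part is pinned down exactly as in the proof of Theorem~\ref{thm1}, by evaluating at ${\bf s}_{\rm even}={\bf 0}$ and matching $\tfrac{x^2}{2}(\log x-\tfrac32)$. Extracting the exact binomial kernel for $\Omega^{[0]}|_{v=0}$ in closed form is the delicate step; everything else is the parity reduction combined with Theorem~\ref{thm1}.
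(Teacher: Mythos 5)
Your proposal is correct and takes essentially the same route as the paper, which obtains Theorem~\ref{thm2} by restricting Theorem~\ref{thm1} to even couplings using $v(x,{\bf s}_{\rm even})\equiv 0$ and the definition~\eqref{fevengdef}; the paper simply cites~\cite{DY2} for the restriction facts \eqref{veven43}--\eqref{ueven43} and the genus-zero formula, whereas you supply the arguments (the parity argument in the Euler--Lagrange equation~\eqref{ELgue45}, the cancellation of $\log(-1)$ at $g=1$, the non-vanishing of $v_1^2-e^u u_1^2$ on the slice $v_\bullet=0$, and the kernel extraction from~\eqref{taun1disp}). The only step you leave as a sketch --- resumming the coefficients of $B(\lambda)B(\mu)$ at $v=0$, $w=e^u$ into the $\frac{j_1 j_2}{j_1+j_2}\binom{j_1}{j_1/2}\binom{j_2}{j_2/2}e^{(j_1+j_2)u/2}$ kernel --- is a routine coefficient computation that does go through, so there is no gap of substance.
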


Let 
\beq
\Lambda=e^{\e \p_x}
\eeq
denote the shift operator. Following~\cite{DLYZ2} (cf.~also~\cite{DY2}), define
the {\it modified even GUE free energy} $\widetilde{\F}(x, {\bf s}_{\rm even};\e)$ by  
\beq\label{129415}
\widetilde{\F}(x, {\bf s}_{\rm even};\e) := \bigl(\Lambda^{1/2}+\Lambda^{-1/2}\bigr)^{-1} \bigl(\mathcal{F}_{{\rm even}}(x,{\bf s}_{\rm even};\e) \bigr) 
= : \sum_{g\geq0} \e^{2g-2} \widetilde{\F}_g(x, {\bf s}_{\rm even}).
\eeq
We call $\widetilde{\F}_g(x, {\bf s}_{\rm even})$ the {\it genus~$g$ modified GUE free energy}, and call 
the exponential $e^{\widetilde{\F}(x, {\bf s}_{\rm even})}=: \widetilde{Z}(x, {\bf s}_{\rm even})$ 
the {\it modified even GUE partition function}.

By definition~\eqref{129415} and by~\eqref{fgfmgequal} we see the followings~\cite{DY2}:
$\widetilde{\F}_0(x, {\bf s}_{\rm even}) = \F^{\rm even}_0(x, {\bf s}_{\rm even})/2$, and 
for $g\geq1$ $\widetilde{\F}_g(x, {\bf s}_{\rm even})$ 
admits the jet representation:
\beq
\widetilde{\F}_g(x, {\bf s}_{\rm even}) = 
\widetilde F_g \biggl(u(x,{\bf s}_{\rm even}), \frac{\p u(x,{\bf s}_{\rm even})}{\p x},\dots,\frac{\p^{3g-2} u(x,{\bf s}_{\rm even})}{\p x^{3g-2}}\biggr),
\eeq
where $\widetilde F_g(u,u_1,\dots,u_{3g-2})$, $g\geq1$, can be determined by 
\begin{align}
& \widetilde F_g(u,u_1,\dots,u_{3g-2}) \label{137428}\\
&= \frac{(-1)^g} 2 E_{2g} u_{2g-2} + \frac12 \sum_{g_1=1}^g (-1)^{g-g_1} E_{2g-2g_1} \p^{2g-2g_1} (F_{g_1}^{\rm even}(u,u_1,\dots,u_{3g_1-2})) \nn
\end{align}
with $E_k$ being the $k$th Euler number 
\[\p:=\sum_{k\geq0} u_{k+1} \p_{u_k}.\] 
Here the shuffling in genus phenomenon also appeared in~\cite{YZ}. We also have
\beq\label{138428}
F_g^{\rm even}(u_1,\dots,u_{3g-2}) = \frac{u_{2g-2}}{2^{2g} (2g)!} 
+ \sum_{m=1}^g \frac{2^{3m-2g}}{(2g-2m)!} \p^{2g-2m} \bigl(\widetilde F_m(u,u_1,\dots,u_{3m-2})\bigr),
\eeq
where $g\geq1$. 


The {\it Hodge-GUE correspondence}, conjectured in~\cite{DY2} (cf.~also~\cite{DLYZ1}) and proved in~\cite{DLYZ2}, is given by the following theorem.

\begin{theorem}[\cite{DLYZ2, DY2}] \label{HodgeGuethm45} The identity
\beq\label{Hodge-GUE49}
\widetilde Z(x,{\bf s}_{\rm even};\e) = \exp\biggl(\frac{A(x,{\bf s}_{\rm even})}{2\e^2}
+\frac{\zeta'(-1)}{2}\biggr) Z_{\rm H}\bigl({\bf t}(x,{\bf s}_{\rm even});\sqrt{2}\e\bigr),
\eeq
holds true in $\CC((\e^2))[[x-1,{\bf s}]]$, where 
\begin{align}
& A(x,{\bf s}_{\rm even})=\frac14 \sum_{j_1,j_2\in \ZZ^{\rm even}_{\ge 2}} \frac{j_1 j_2}{j_1+j_2} 
\binom{j_1}{j_1/2}\binom{j_2}{j_2/2} \Bigl(s_{j_1} - \frac{\delta_{j_1,2}}2\Bigr)\Bigl(s_{j_2} - \frac{\delta_{j_2,2}}2\Bigr) \label{defA48} \\
& \qquad \qquad \qquad + x \sum_{j\in \ZZ^{\rm even}_{\ge 2}} \binom{j}{j/2}\Bigl(s_j - \frac{\delta_{j,2}}2\Bigr), \nn
\end{align}
and 
\beq\label{tixs}
t_i(x,{\bf s}) = \sum_{j \in \ZZ^{\rm even}_{\ge 2}} (j/2)^{i+1} \binom{j}{j/2} \Bigl(s_j - \frac{\delta_{j,2}}2\Bigr) + \delta_{i,1} + x \delta_{i,0}, \quad i\geq0.
\eeq
\end{theorem}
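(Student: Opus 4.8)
The plan is to prove \eqref{Hodge-GUE49} by passing to logarithms, expanding both sides in powers of~$\e$, and matching genus by genus after reducing everything to the jet variables of the genus-zero field. Taking $\log$ of both sides, \eqref{Hodge-GUE49} becomes
\[
\widetilde{\F}(x,{\bf s}_{\rm even};\e) = \frac{A(x,{\bf s}_{\rm even})}{2\e^2} + \frac{\zeta'(-1)}{2} + \sum_{g\ge0} \bigl(\sqrt{2}\,\e\bigr)^{2g-2}\,\cH_g\bigl({\bf t}(x,{\bf s}_{\rm even})\bigr),
\]
so that extracting the coefficient of $\e^{2g-2}$ gives $\widetilde{\F}_0 = \tfrac12 A + \tfrac12\cH_0({\bf t})$ in genus zero and, because $(\sqrt2)^{2g-2}=2^{g-1}$, the relations $\widetilde{\F}_g = 2^{g-1}\cH_g({\bf t}) + \tfrac{\zeta'(-1)}2\delta_{g,1}$ for $g\ge1$ (the constant $\zeta'(-1)/2$ sitting at the $\e^0$ level, i.e.\ in genus one).

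First I would set up the genus-zero dictionary relating the two base fields. Let $v_{\rm WK}$ be as in Lemma~\ref{lemma4428}; its topological (dispersionless string-equation) characterization reads $v_{\rm WK}=\sum_{i\ge0} t_i\, v_{\rm WK}^i/i!$, and substituting the explicit times \eqref{tixs} and summing the resulting exponential series in~$i$ yields a single scalar equation in $e^{v_{\rm WK}/\!}$-type terms. Comparing it coupling-by-coupling with the hodograph relation for $u(x,{\bf s}_{\rm even})$ coming from \eqref{ELgue45} at $v=0$ and \eqref{ueven43}, one gets the key identity $v_{\rm WK}\bigl({\bf t}(x,{\bf s}_{\rm even})\bigr) = u(x,{\bf s}_{\rm even})$. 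Moreover, in \eqref{tixs} only $t_0$ depends on~$x$, and linearly, so $\p_x t_0 = 1$ and $\p_x t_i = 0$ for $i\ge1$; by the chain rule $\p_{t_0} = \p_x$ on pull-backs, whence the jet variables literally coincide, $\p_{t_0}^k v_{\rm WK}({\bf t}) = \p_x^k u = u_k$. This converts every jet-space statement about $\cH_g$ into one about the $u_k$ and reduces the higher-genus part of \eqref{Hodge-GUE49} to the jet identity $\widetilde F_g(u,u_1,\dots,u_{3g-2}) = 2^{g-1} H_g(u,u_1,\dots,u_{3g-2}) + \tfrac{\zeta'(-1)}2\delta_{g,1}$, with $H_g$ as in Lemma~\ref{lemma4428}.

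Genus zero I would then settle directly. Using $\widetilde{\F}_0 = \tfrac12\F^{\rm even}_0$ from \eqref{129415}, the closed form of $\F^{\rm even}_0$ in \eqref{F0even}, the equality $\cH_0 = \F^{\rm WK}_0$ from \eqref{genus0id44}, and the dictionary $v_{\rm WK}({\bf t}) = u$, both sides of the required equation $\F^{\rm even}_0 = A + \F^{\rm WK}_0({\bf t})$ become explicit functions of $u$ and the even couplings. The purely quadratic and linear pieces match \eqref{defA48}, while the remaining $e^{(j_1+j_2)u/2}$-terms match the genus-zero two-point data, completing genus zero.

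The main obstacle is the all-genus jet identity $\widetilde F_g = 2^{g-1}H_g + \tfrac{\zeta'(-1)}2\delta_{g,1}$, and I would prove it by loop equations together with uniqueness. On the GUE side, $F^{\rm even}_g = F^{\PP^1}_g\big|_{v_0=v_1=\cdots=0}$ (up to the constant in \eqref{fevengdef}) inherits the $v=0$ restriction of the $\PP^1$ loop equation \eqref{loop}, and $\widetilde F_g$ is assembled from $F^{\rm even}_{g'}$, $g'\le g$, through the explicit expansion $(\Lambda^{1/2}+\Lambda^{-1/2})^{-1} = \tfrac12\,\mathrm{sech}\bigl(\tfrac\e2\p_x\bigr)$ defining \eqref{129415}; on the Hodge side, the $H_g$ are the unique jet solutions of the Dubrovin--Zhang-type loop equation attached to the cubic Hodge integrals. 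The crux is to show that applying this $\mathrm{sech}$-dressing and the rescaling $\e\mapsto\sqrt2\,\e$ carries the $v=0$ reduction of \eqref{loop} into the Hodge loop equation governing $\sum_g(\sqrt2\,\e)^{2g}H_g$, so that the factors of~$2$ produced by the half-shift and by the rescaling combine into the single scalar $2^{g-1}$ in each genus. Once the two loop equations are seen to coincide, uniqueness of their jet solutions—with the additive ambiguity fixed by the homogeneity condition \eqref{homogenfgp1} and its Hodge analogue, and the genus-one constant pinned down by a direct $g=1$ computation serving as the normalization anchor—forces the identity in every genus. I expect the delicate part to be precisely this bookkeeping of the powers of~$2$ and of the constant $\zeta'(-1)/2$, rather than any conceptual difficulty.
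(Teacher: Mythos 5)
The first thing to note is that the paper does not prove Theorem~\ref{HodgeGuethm45} at all: it is imported from the literature, with the genus-zero part attributed to~\cite{DY2} and the higher-genus parts to~\cite{DLYZ2}; what the paper does after stating it is only to rewrite the identity genus by genus in jet variables, using $v_{\rm WK}({\bf t}(x,{\bf s}))=u(x,{\bf s})$ and $\p_{t_0}=\p_x$. Your reduction --- taking logarithms, extracting the coefficient of $\e^{2g-2}$, establishing the dictionary $v_{\rm WK}({\bf t}(x,{\bf s}_{\rm even}))=u(x,{\bf s}_{\rm even})$ by comparing the two hodograph equations, and noting $\p_{t_0}=\p_x$ --- is exactly this rewriting, and your genus-zero verification is a credible reconstruction of the argument of~\cite{DY2}. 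So up to and including genus zero your plan is sound.

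The higher-genus part, however, has a genuine gap, and it sits exactly at what you yourself call the crux. First, the claim that $F^{\rm even}_g$ ``inherits the $v=0$ restriction of the loop equation \eqref{loop}'' does not hold as stated: \eqref{loop} contains the transverse derivatives $\p \Delta F/\p v_r$ and, worse, the second-derivative combinations $S(\Delta F,v_k,v_l)$. Parity in the $v$-jets kills the first derivatives on the locus $v_0=v_1=\cdots=0$, but $\p^2\Delta F/\p v_k\p v_l$ restricted to that locus is not determined by the restricted functions $F^{\rm even}_g$, so setting the $v$-jets to zero in \eqref{loop} does not produce a closed equation for the even free energies. Second, the statement that the dressing $\bigl(\Lambda^{1/2}+\Lambda^{-1/2}\bigr)^{-1}$ together with $\e\mapsto\sqrt{2}\,\e$ carries this restricted equation into the loop equation governing the $H_g$ is not bookkeeping of factors of $2$ and of $\zeta'(-1)/2$: it is precisely the analytic content of the Hodge--GUE correspondence, i.e.\ the theorem you are trying to prove, and it occupies essentially all of~\cite{DLYZ2} (where, moreover, the loop equation for the special cubic Hodge free energies is derived intrinsically for a single dependent variable, not obtained as a restriction of the two-field equation \eqref{loop}). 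As written, the higher-genus half of the statement is assumed rather than established, so the proposal is a plausible strategy outline in the spirit of the cited proof, but not a proof.
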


Taking logarithms on both sides of the identity~\eqref{Hodge-GUE49}, we find that it is equivalent to 
\beq\label{Hodge-GUE492}
\widetilde \F(x,{\bf s}_{\rm even};\e) = \frac{A(x,{\bf s}_{\rm even})}{2\e^2}
+\frac{\zeta'(-1)}{2} + \mathcal{H}({\bf t}\bigl(x,{\bf s}_{\rm even});\sqrt{2}\e\bigr).
\eeq
The $g=0$ part of this identity is proved in~\cite{DY2}, and the higher genera parts are proved in~\cite{DLYZ2}. 
To understand more the higher genera parts of~\eqref{Hodge-GUE49}, again, we go to the jet space. 
The following lemma recalls the important relationship between $v_{\rm WK}({\bf t})$ and $u(x,{\bf s})$. 
\begin{lemma}[\cite{DY2}]
Under the substitution~\eqref{tixs}, the following identity is true:
\beq\label{uvkey}
v_{\rm WK}({\bf t}(x,{\bf s})) = u(x,{\bf s}).
\eeq
\end{lemma}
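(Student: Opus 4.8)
The plan is to avoid working directly with the explicit power series \eqref{ueven43}, and instead to characterize both sides of \eqref{uvkey} by one and the same implicit equation. Here ${\bf s}={\bf s}_{\rm even}$ and $u(x,{\bf s})=u(x,{\bf s}_{\rm even})$. First I would recall the genus zero string equation for the Witten--Kontsevich free energy: the function $v_{\rm WK}({\bf t})=\p_{t_0}^2\F^{\rm WK}_0({\bf t})$ is the topological solution of the dispersionless KdV hierarchy, and it is well known to be determined by the hodograph relation
\beq\label{wkimplicit}
v_{\rm WK} = \sum_{i\geq0} t_i \frac{(v_{\rm WK})^i}{i!},
\eeq
together with $v_{\rm WK}\big|_{{\bf t}={\bf 0}}=0$. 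No explicit expansion of $v_{\rm WK}$ in the $t_i$ is needed; it is enough that \eqref{wkimplicit} holds.

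Next I would substitute the prescription \eqref{tixs} into \eqref{wkimplicit} and resum. Writing $w:=v_{\rm WK}({\bf t}(x,{\bf s}))$ and inserting $t_i(x,{\bf s})$, the two Kronecker-delta terms contribute $x$ (from $\delta_{i,0}$) and $w$ (from $\delta_{i,1}$), while for each fixed even $j\geq2$ the sum over $i$ telescopes into an exponential,
\beq
\sum_{i\geq0}\frac{w^i}{i!}\Bigl(\frac j2\Bigr)^{i+1} = \frac j2\, e^{jw/2}.
\eeq
Cancelling the common term $w$ on the two sides, the relation \eqref{wkimplicit} collapses to
\beq\label{starred}
x + \sum_{j\in\ZZ^{\rm even}_{\ge2}} \frac j2 \binom{j}{j/2}\Bigl(s_j-\frac{\delta_{j,2}}2\Bigr) e^{jw/2} = 0.
\eeq

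The final step is to recognize \eqref{starred} as exactly the genus zero Euler--Lagrange equation \eqref{ELgue45} restricted to even couplings. I would compute $\p\theta_{2,p}/\p u$ from the generating series \eqref{theta2z} at $v=0$, observing that only odd indices $p=2m-1$ survive, with $\p_u\theta_{2,2m-1}\big|_{v=0}=e^{mu}/(m!(m-1)!)$; converting $T^{2,2m-1}=(2m)!\,s_{2m}$ and using $(2m)!/(m!(m-1)!)=\binom{2m}{m}m=\binom{j}{j/2}(j/2)$ with $j=2m$ turns the $\beta=2$ component of \eqref{ELgue45} into precisely \eqref{starred} with $w$ replaced by $u(x,{\bf s}_{\rm even})$. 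Since $u(x,{\bf s}_{\rm even})$ is by construction the even restriction of the monotone solution characterized by \eqref{ELgue45}, and since both $w$ and $u$ reduce to $\log x$ at ${\bf s}={\bf 0}$ (for $w$ this follows because \eqref{tixs} then gives $t_0=x-1$, $t_1=0$, $t_i=-1$ for $i\ge2$, so \eqref{wkimplicit} becomes $e^w=x$), uniqueness of the formal power series solution in ${\bf s}$ forces $w=u$, which is \eqref{uvkey}.

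I expect the main obstacle to be purely a matter of bookkeeping: invoking \eqref{wkimplicit} in the convention where the dilaton shift has already been absorbed into \eqref{tixs}, and checking that the factor $\binom{j}{j/2}(j/2)$ arising from $(2m)!/(m!(m-1)!)$ matches the coefficient produced by the resummation. One should also confirm that $v_{\rm WK}({\bf t}(x,{\bf s}))$ is a well-defined element of the relevant ring of power series in ${\bf s}$; this is automatic because \eqref{tixs} sends each $s_j$ to a finite combination and, crucially, the implicit-equation route never requires the explicit expansion of $v_{\rm WK}$ in the infinitely many variables $t_i$.
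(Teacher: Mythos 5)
You should note at the outset that the paper does not actually prove this lemma---it is quoted from~\cite{DY2}---so there is no in-paper argument to compare against; what follows assesses your proof against the ingredients the paper itself supplies. With that understood, your proof is correct, and it is a self-contained derivation built exactly on those ingredients. The hodograph relation $v_{\rm WK}=\sum_{i\ge0}t_i\,v_{\rm WK}^i/i!$ is indeed the standard genus-zero string-equation characterization of the topological solution of the dispersionless KdV hierarchy; your resummation $\sum_{i\ge0}(j/2)^{i+1}w^i/i!=(j/2)\,e^{jw/2}$ is elementary; and the computation on the GUE side checks out: from~\eqref{theta2z} one gets $\theta_{2,p}=\sum_{2m+k=p+1}e^{mu}v^k/(m!^2\,k!)$, so $\p\theta_{2,p}/\p u$ at $v=0$ vanishes unless $p=2m-1$, where it equals $e^{mu}/(m!\,(m-1)!)$; with $T^{2,2m-1}=(2m)!\,s_{2m}$ and $(2m)!/(m!\,(m-1)!)=(j/2)\binom{j}{j/2}$, $j=2m$, the $\beta=2$ component of~\eqref{ELgue45} restricted to ${\bf s}_{\rm odd}={\bf 0}$ (using $v(x,{\bf s}_{\rm even})=0$, i.e.~\eqref{veven43}, which the paper states prior to and independently of the lemma, so there is no circularity) becomes exactly your resummed equation $x+\sum_{j\in\ZZ^{\rm even}_{\ge2}}\frac j2\binom{j}{j/2}\bigl(s_j-\frac12\delta_{j,2}\bigr)e^{ju/2}=0$. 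Rewritten as $e^{w}=x+\sum_{m\ge1}m\binom{2m}{m}s_{2m}e^{mw}$, this equation determines its solution uniquely order by order in ${\bf s}$ once the constant term $\log x$ is fixed, and both sides of~\eqref{uvkey} satisfy it with that constant term, so the identity follows. A pleasant byproduct of your route is that the explicit series~\eqref{ueven43} is never needed; it can in fact be recovered from the implicit equation by Lagrange inversion.

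Two minor points of care. First, your stated reason for the well-definedness of $v_{\rm WK}({\bf t}(x,{\bf s}))$ is not quite the right one: since $t_i(x,{\bf 0})=-1+\delta_{i,1}+x\,\delta_{i,0}$ is nonzero for every $i\neq1$, the naive term-by-term substitution of~\eqref{tixs} into the ${\bf t}$-power-series of $v_{\rm WK}$ assigns to each monomial in $(x-1,{\bf s})$ an infinite sum of contributions, which is not formally meaningful; the honest definition of the left-hand side is precisely the one your proof actually implements, namely through the hodograph characterization (equivalently, through the solution of the dispersionless hierarchy), and this is also how the lemma must be read in the paper and in~\cite{DY2}. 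Second, the normalization $u(x,{\bf 0})=\log x$ that your uniqueness step requires is the one consistent with~\eqref{ueven43}, with Theorem~\ref{thm2}, and with the Toda initial data $W(x,{\bf 0};\e)=x$ under $W=e^u$, even though~\eqref{ini2022} as printed reads $u(x,{\bf 0})=x$; you implicitly chose the correct convention, which is worth making explicit.
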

Using~\eqref{uvkey} and observing that 
\beq
\frac{\p}{\p t_0} = \frac{\p}{\p x},
\eeq
we can rewrite the higher genera parts of identity~\eqref{Hodge-GUE49} in the jet space as follows:
\beq\label{jetFHequal410}
\widetilde F_g (b_0,b_1,\dots,b_{3g-2}) = H_g(b_0,b_1,\dots,b_{3g-2}), \quad g\geq1.
\eeq
Therefore, we have identified the higher genera parts in jets of the modified even GUE partition function with those of the Hodge partition function. 
Then by using~\eqref{138428}, one comes back to the matrix gravity $F^{\rm even}_g$ in the higher genera 
with the topological gravity as a starting point, i.e.: $F_g^{\rm WK} \leftrightarrow H_g = \widetilde F_g \leftrightarrow F_g^{\rm even}$, $g\geq1$ 
(cf.~the diagram of the Introduction).



Comparing the lowest degree part of the identity~\eqref{jetFHequal410} with respect to~$\widetilde{\deg}$, 
using~\eqref{137428}, and noticing that the operator~$\p$ does not change~$\widetilde{\deg}$, 
we arrive at the following corollary, which 
explains the starting arrow on top of the square of the diagram of the Introduction.
\begin{cor} \label{cor1}
The following equalities are true:
\begin{align}
& 2 F^{\rm WK}_1(z_0,z_1) = F^{\rm even}_1(z_1) -\zeta'(-1) = \frac1{12} \log z_1, \label{146428}\\ 
& 2^g F^{\rm WK}_g(z_1,\dots,z_{3g-2}) = F_g^{{\rm even},[1-g]}(z_1,\dots,z_{3g-2}),\quad g\geq 2. \label{147428}
\end{align}
\end{cor}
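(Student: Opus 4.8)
The plan is to obtain the Corollary by extracting the lowest $\widetilde{\deg}$-homogeneous component of the jet-space identity \eqref{jetFHequal410}, namely $\widetilde F_g = H_g$, and reading off the scalar via the genus-shuffling formula \eqref{138428}. The genus-one statement \eqref{146428} I would dispose of by hand: substituting the explicit formulas $F^{\rm WK}_1(z_1)=\frac1{24}\log z_1$ and $F^{\rm even}_1(z_1)=\frac1{12}\log z_1+\zeta'(-1)$ gives at once $2F^{\rm WK}_1=\frac1{12}\log z_1=F^{\rm even}_1-\zeta'(-1)$. (Equivalently, this drops out of the $b_0$-free, $\log$-part of \eqref{jetFHequal410} at $g=1$: by \eqref{138428} the $\log$-term of $\widetilde F_1$ is twice that of $F^{\rm even}_1$, and it matches the $\log$-term $F^{\rm WK}_1$ of $H_1$ recorded in \eqref{lowestdegreewk}.)

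For $g\geq 2$ I would argue purely by the grading $\widetilde{\deg}$. Feeding \eqref{jetFHequal410} into \eqref{138428} rewrites
\[
F^{\rm even}_g \;=\; \frac{u_{2g-2}}{2^{2g}(2g)!}\;+\;\sum_{m=1}^{g}\frac{2^{3m-2g}}{(2g-2m)!}\,\p^{2g-2m}\bigl(H_m\bigr).
\]
The structural inputs are: $\p$ preserves $\widetilde{\deg}$; $H_m$ is supported in $\widetilde{\deg}\in[1-m,2m-2]$ with bottom component $H_m^{[1-m]}=F^{\rm WK}_m$ (this is \eqref{lowestdegreewk}); and $u_{2g-2}$ has $\widetilde{\deg}=1$. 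Comparing degrees term by term, the correction $u_{2g-2}$ sits at degree $1>1-g$, every summand with $m<g$ has bottom degree $1-m\ge 2-g>1-g$, and only the top summand $m=g$—with coefficient $2^{3g-2g}/0!=2^g$—reaches the minimal degree $1-g$. Hence the degree-$(1-g)$ part of $F^{\rm even}_g$ is $2^g H_g^{[1-g]}=2^g F^{\rm WK}_g$, which is exactly \eqref{147428}.

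Since the substantive content is already in hand—the jet identity \eqref{jetFHequal410}, the bottom-degree identification \eqref{lowestdegreewk}, and the shuffling relation \eqref{138428}—the only genuine work is this degree bookkeeping, and I expect the main point of care to be the verification that no summand other than $m=g$ survives at degree $1-g$, together with the exact tracking of the factor $2^{g}$ coming from $2^{3m-2g}|_{m=g}$. This is also where the hypothesis $g\geq 2$ is essential: for $g=1$ one has $1-g=0$ and the bottom component is the non-polynomial $\log$-term, which is why genus one must be isolated. As a sanity check I would match against the listed data at $g=2$, where the three monomials of $F^{\rm WK}_2$ are each $\widetilde{\deg}=-1=1-g$, so $4F^{\rm WK}_2$ must reproduce $F_2^{{\rm even},[-1]}$, confirming the power $2^{g}=4$. (One could run the same extraction through the equivalent relation \eqref{137428}, but then the bottom degree of each $F^{\rm even}_{g_1}$ has to be established first, so the route via \eqref{138428} makes the coefficient $2^g$ most transparent.)
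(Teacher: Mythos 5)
Your proposal is correct, and at bottom it is the same argument the paper runs: extract the lowest $\widetilde{\deg}$-homogeneous component of the jet-space Hodge--GUE identity \eqref{jetFHequal410}, using that the operator $\p$ preserves $\widetilde{\deg}$ and that the bottom component of $H_g$ is $F^{\rm WK}_g$ by \eqref{lowestdegreewk}. The one substantive difference is the direction in which you traverse the genus-shuffling relation: the paper's derivation goes through \eqref{137428}, which expresses $\widetilde F_g$ in terms of the $F^{\rm even}_{g_1}$'s, whereas you substitute $\widetilde F_m=H_m$ into the inverse relation \eqref{138428}. Your direction has two small advantages, which you correctly identify: (i) it needs as input only the degree support of $H_m$, which the paper records, while the route through \eqref{137428} additionally requires knowing that each lower-genus $F^{\rm even}_{g_1}$ has bottom degree $1-g_1>1-g$ --- a fact most easily obtained from \eqref{138428} itself or by induction; and (ii) the scalar $2^g$ is read off immediately from the explicit coefficient $2^{3m-2g}/(2g-2m)!$ at $m=g$, rather than having to be tracked through the Euler-number prefactors of \eqref{137428}. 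Your handling of genus one by direct substitution of $F^{\rm WK}_1=\frac1{24}\log z_1$ and $F^{\rm even}_1=\frac1{12}\log z_1+\zeta'(-1)$ is also fine (the paper covers $g=1$ uniformly via the first equality in \eqref{lowestdegreewk}). One harmless slip in your parenthetical aside on genus one: \eqref{138428} at $g=1$ gives $F^{\rm even}_1=\frac{u_0}{8}+2\widetilde F_1$, so it is the $\log$-term of $F^{\rm even}_1$ that is twice that of $\widetilde F_1$, not the reverse; this does not affect your main genus-one argument or the conclusion.
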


\end{document}